\newtheorem{theorem}{Theorem}[section]
\newtheorem{lemma}[theorem]{Lemma}
\newtheorem{corollary}[theorem]{Corollary}
\theoremstyle{definition}
\newtheorem{definition}[theorem]{Definition}
\newtheorem{example}[theorem]{Example}
\theoremstyle{remark}
\newtheorem{remark}[theorem]{Remark}
\providecommand{\x}{\boldsymbol{x}}
\providecommand{\y}{\boldsymbol{y}}
\def\qi#1 {\fbox {\footnote {\ }}\ \footnotetext { From Qi: {\color{red}#1}}}
\providecommand{\E}[1]{\mathbb{E} \left[ #1 \right]}
\begin{document}
\title[On the size distribution of FLL balls with radius one]{On the size distribution of the fixed-length Levenshtein balls with radius one}

\author{Geyang~Wang}
\address{Department of Electronic and Computer Engineering, Institute for Systems Research, University of Maryland, College Park, MD 20742, USA} 
\curraddr{}
\email{wanggy@umd.edu}
\thanks{}

\author{Qi~Wang}
\address{Department of Computer Science and Engineering, National Center for Applied Mathematics Shenzhen, Southern University of Science and Technology, Nanshan District, Shenzhen, Guangdong 518055, China}
\email{wangqi@sustech.edu.cn}
\thanks{The paper was presented (in part) at The Twelfth International Workshop on Coding and Cryptography (WCC), Rostock, Germany, March 7--11, 2022.}


\subjclass[2010]{Primary 94B50; Secondary 05D40}
\keywords{Levenshtein ball, fixed-length Levenshtein distance, Azuma's inequality}

\date{}

\dedicatory{}

\maketitle              
\begin{abstract}
  The \emph{fixed-length Levenshtein} (FLL) distance between two words $\x, \y \in \mathbb{Z}_m^n$ is the smallest integer $t$ such that $\x$ can be transformed to $\y$ by $t$ insertions and $t$ deletions. The size of a ball in the FLL metric is a fundamental yet challenging problem. Very recently, Bar-Lev, Etzion, and Yaakobi explicitly determined the minimum, maximum and average sizes of the FLL balls with radius one, respectively. In this paper, based on these results, we further prove that the size of the FLL balls with radius one is highly concentrated around its mean by Azuma's inequality.
\end{abstract}
\section{Introduction}\label{sec-intro}

The \emph{Levenshtein distance} (also known as {\em edit distance}) between two words is the smallest number of deletions and insertions needed to transform one word to the other. This is a metric used for codes correcting synchronization errors. The theory of coding with respect to the Levenshtein distance dates back to the 1960s~\cite{Leven66}, but there has been much less progress in comparison with the classical theory of coding with respect to the Hamming distance. As commented by Mitzenmacher~\cite{Mitzen09}, \textit{``Channels with synchronization errors, including both insertions and deletions as well as more general timing errors, are simply not adequately
understood by current theory. Given the near-complete knowledge we have for channels with erasures and errors \ldots, our lack of understanding about channels with synchronization errors is truly remarkable.''} Indeed, even the fundamental problem of counting the number of words formed by deleting and inserting symbols into a given word remains elusive.

In 1966, Levenshtein~\cite{Leven66} gave the earliest bound on the number of words formed by deleting a constant number of symbols from a word. This bound was later improved by Calabi and Hartnett~\cite{Calabi69}; also by Hirschberg and Regnier~\cite{Hirs00}.  On the other hand, the number of words formed by inserting $r$ symbols into $\x \in \mathbb{Z}_m^n$ does not depend on $\x$ itself, and is given by $\sum_{i=0}^{r}\binom{n+r}{i}(m-1)^i$~\cite{Leven66}. Motivated by
estimating the rate of synchronization error-correction codes, Sala and Dolecek~\cite{Sala13} studied the number of words formed by deleting and inserting a constant number of symbols into a given word. The \emph{fixed-length Levenshtein (FLL) distance} (originally under the name ``\emph{ancestor distance}'') between two words $\x, \y \in \mathbb{Z}_m^n$ is defined as half of the traditional Levenshtein distance, i.e., the smallest $t$ such that $\x$ can be
transformed to $\y$ by $t$ insertions \emph{and} $t$ deletions. An explicit expression was given on the FLL ball size with radius one (see Lemma~\ref{lemma: sala}), and the size of the FLL balls with radius larger than one was upper bounded in~\cite{Sala13}. Very recently, Bar-Lev, Etzion, and Yaakobi~\cite{Barlev22} found the explicit expressions for the minimum, the maximum and the average sizes of {the} FLL balls with radius one, respectively. A natural follow-up question is how the size of the FLL balls with radius one is distributed. In this paper, we prove that the size of the FLL balls with radius one is highly concentrated around its mean by Azuma's inequality (see Theorem~\ref{thm: binary distribution} and Theorem~\ref{thm: m-ary distribution}).

The rest of this paper is organized as follows.
In Section~\ref{sec: preliminaries}, we provide some notations, definitions, and auxiliary results. In Section~\ref{sec: distribution}, we analyze in detail the size distribution of the FLL balls with radius one, and state the main results in Theorem~\ref{thm: binary distribution} and Theorem~\ref{thm: m-ary distribution}.
Finally, we conclude this paper in Section~\ref{sec: conclusion}.

\section{Preliminaries}~\label{sec: preliminaries}

Let $\mathbb{Z}_m = \{0,1,\dots,m-1\}$ for an integer $m \ge 2$. We use $[n]$ to denote the set $\{1,\dots,n\}$. In this paper, vectors in $\mathbb{Z}_m^n$ are called $m$-ary sequences (words) with length $n$, and {are written} as strings for convenience. For a word $\x = x_1, \dots, x_n \in \mathbb{Z}_m^n$ and $1 \leq i \le j \le n$, the subsequence $x_i, x_{i+1}, \dots, x_j$ is denoted by $\x_{[i,j]}$. A \emph{run} of $\x$ is a {\em maximal} subsequence $\x_{[i,j]}$ with all symbols from $x_i$ to $x_j$ identical, i.e., $x_i = x_{i+1} = \dots = x_j$, $x_{i-1} \ne x_i$ for $i \ge 2$ and~$x_{j+1} \ne x_j$ for $j \leq n- 1$. An \emph{alternating segment} of $\x$ is a \emph{maximal} subsequence $\x_{[i,j]}$ with the form $abab\dots ab$ or $abab\dots ba,$ where $a,b \in \mathbb{Z}_m$ and $a \ne b$. Note that by `maximal' we mean that both $\x_{[i-1,j]}$ (when $i>1$) and $\x_{[i,j+1]}$ (when $j<n$) are no longer alternating. The number of runs in $\x$ is denoted by $\rho(\x)$, and the number of alternating segments in $\x$ is denoted by $a(\x)$. For each $\x \in \mathbb{Z}_2^n$, we have $\rho(\x) + a(\x) = n + 1$. The lengths of the first and the last alternating segments in a word $\x$ are of particular interest in this work and denoted by {$h(\x)$ and $t(\x)$}, respectively. 

\begin{example}
  Let $\x = 001100101$. Then the runs of $\x$ are $00, 11, 00, 1, 0, 1$ and $\rho(\x) = 6$; The alternating segments of $\x$ are $0, 01, 10, 0101$ and $a(\x) = 4$, $h(\x) = 1$, $t(\x) = 4$.
\end{example}

\begin{lemma}\label{lemma: exp h and t}
  Let $n>0$, $m>1$ both be integers. If $\x$ is picked uniformly at random, then we have
    \begin{equation}\label{equ: exp h}
        \underset{\x \in \mathbb{Z}_m^n}{\mathbb{E}}\left[ h(\x) \right] = \underset{\x \in \mathbb{Z}_m^n}{\mathbb{E}}\left[ h(\x) \vert x_1 \right] = 2 - \frac{1}{m^{n-1}},
    \end{equation}
    \begin{equation}\label{equ: exp t}
        \underset{\x \in \mathbb{Z}_m^n}{\mathbb{E}}\left[ t(\x) \right] = \underset{\x \in \mathbb{Z}_m^n}{\mathbb{E}}\left[ t(\x) \vert x_n \right] = 2 - \frac{1}{m^{n-1}}.
    \end{equation}
\end{lemma}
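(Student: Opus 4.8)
The plan is to compute $\mathbb{E}[h(\x)]$ directly from the tail-sum (layer-cake) formula
$$\mathbb{E}[h(\x)] = \sum_{\ell=1}^{n} \Pr[h(\x) \geq \ell],$$
and then to obtain the two identities for $t(\x)$ from the invariance of the uniform distribution under reversing a word. The whole argument thus reduces to understanding the events $\{h(\x) \geq \ell\}$, which I would handle by a clean counting lemma.

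The key observation is that, because the alternating segments partition $\x$ greedily from the left and the first segment necessarily begins at position $1$, one has $h(\x) \geq \ell$ if and only if the prefix $\x_{[1,\ell]}$ is itself alternating. For $\ell = 1$ this holds for every word, so $\Pr[h(\x) \geq 1] = 1$. For $\ell \geq 2$, an alternating prefix $x_1 x_2 \cdots x_\ell$ is completely determined by its first two (necessarily distinct) symbols, since the alternation forces $x_i = x_{i-2}$ for all $i \geq 3$; hence there are exactly $m(m-1)$ admissible choices for $\x_{[1,\ell]}$ and $m^{\,n-\ell}$ free choices for the remaining coordinates. This yields
$$\Pr[h(\x) \geq \ell] = \frac{m(m-1)\,m^{\,n-\ell}}{m^{n}} = \frac{m-1}{m^{\ell-1}}, \qquad 2 \leq \ell \leq n.$$

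Substituting these into the tail sum leaves a finite geometric series,
$$\mathbb{E}[h(\x)] = 1 + (m-1)\sum_{\ell=2}^{n} \frac{1}{m^{\ell-1}} = 2 - \frac{1}{m^{n-1}},$$
which is the claimed value. For the conditional identity I would rerun the same count with $x_1 = a$ held fixed: the argument shows $\Pr[h(\x) \geq \ell \mid x_1 = a] = (m-1)/m^{\ell-1}$ for $\ell \geq 2$, a quantity independent of $a$ by the relabeling symmetry among the $m$ alphabet symbols, so $\mathbb{E}[h(\x) \mid x_1]$ is the constant $2 - 1/m^{n-1}$ and in particular equals the unconditional mean. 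Finally, the reversal map $x_1 x_2 \cdots x_n \mapsto x_n x_{n-1} \cdots x_1$ is a measure-preserving bijection of $\mathbb{Z}_m^n$ that sends the last alternating segment to the first and $x_n$ to $x_1$; both statements for $t(\x)$ then follow at once from those for $h(\x)$.

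The geometric summation is routine. The only place demanding care is the combinatorial equivalence $\{h(\x) \geq \ell\} = \{\x_{[1,\ell]} \text{ alternating}\}$, together with the boundary cases $\ell = 1$ (where a single symbol must be admitted as a degenerate alternating segment, consistent with the worked example) and $\ell = n$ (where no free tail coordinates remain). I expect the main conceptual obstacle, if any, to be pinning down the exact characterization of an alternating prefix, namely $x_1 \neq x_2$ with $x_i = x_{i-2}$ for $i \geq 3$, and verifying that it produces precisely $m(m-1)$ prefixes without overcounting three-symbol strings such as $abc$ that are \emph{not} alternating when $m \geq 3$.
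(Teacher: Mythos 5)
Your proposal is correct and follows essentially the same route as the paper: the tail-sum identity $\mathbb{E}[h(\x)] = \sum_{\ell=1}^{n} \Pr[h(\x) \ge \ell]$ is exactly the paper's decomposition $h(\x) = \sum_{i=1}^{n} I_i$ into indicators of the event that position $i$ lies in the first alternating segment (since $I_\ell = 1$ precisely when $\x_{[1,\ell]}$ is alternating), and both arguments rest on the same count giving $\Pr[h(\x) \ge \ell] = (m-1)/m^{\ell-1}$ for $\ell \ge 2$ followed by the same geometric summation. Your treatment of the conditional expectation by re-running the count with $x_1$ fixed, and of $t(\x)$ via the reversal bijection, are direct substitutes for the paper's relabeling-symmetry (equivalence-class) argument and its appeal to symmetry, respectively.
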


\begin{proof}
    We prove Eq.~\eqref{equ: exp h}, and Eq.~\eqref{equ: exp t} then follows by symmetry.
    Let $\x \in \mathbb{Z}_m^n$ be an arbitrary sequence.
    Define the indicator variable $I_i$ for $i \in [n]$ as follows.
    \[
        I_i = \begin{cases}
            1, & \text{ if $x_i$ belongs to the first alternating segment}, \\
            0, & \text{ otherwise.}
        \end{cases}    
    \]
    Then we have $h(\x) = \sum_{i=1}^{n}I_i$. Also note that $\Pr(I_1 = 1) = 1$, and $ \Pr(I_i = 1) = (m-1)/m^{i-1}$ for $i > 1$. Therefore, we have
    \[
        \begin{aligned}
        \E{h(\x)} & = \sum_{i=1}^n\E{I_i} = 1 + (m-1)\left(\frac{1}{m} + \dots + \frac{1}{m^{n-1}}\right) \\
        & = 2 - \frac{1}{m^{n-1}}.
        \end{aligned}
    \]

    Intuitively, exposing the first symbol of a random $\x \in \mathbb{Z}_m^n$ does not provide any information about $h(\x)$ since $h(\x) \ge 1$ for all words. 
    Thus, we have $\E{h(\x)} = \E{h(\x) \middle \vert x_1}$.
    More precisely, we define an equivalence relation ``$\approx$'' on $\mathbb{Z}_m^n$ as follows.
    For $\x, \y \in \mathbb{Z}_m^n$, we say that $\x \approx \y$ if and only if $x_i \equiv y_i + k \mod{m}$ for each $i \in [n]$ and some integer $k$. 
    Clearly, $h(\x) = h(\y)$ if $\x \approx \y$, and $\{a\} \times \mathbb{Z}_m^{n-1}$ contains exactly one element from each equivalence class. 
    It then follows that $\E{h(\x) \vert x_1 = a} = \E{h(\x) \vert x_1 = b}$.
    Thus, we have
    \[
    \E{h(\x)} = \sum_{a \in \mathbb{Z}_m} \frac{1}{m} \E{h(\x) \vert x_1 = a} = \E{h(\x) \vert x_1 = a},
    \]
    and the proof is then completed. 

\end{proof}

A sequence $\y \in \mathbb{Z}_m^{n-t}$ with $t \in [n-1]$ is called a \emph{$t$-subsequence} of $\x \in \mathbb{Z}_m^n$ if $\y$ is formed by deleting $t$ symbols from $\x$. In other words, $\y = (x_{i_1}, x_{i_2}, \dots, x_{i_{n-t}})$, where $1 \leq i_1 < i_2 < \dots <i_{n-t} \leq n$. Likewise, $\x$ is called a \emph{$t$-supersequence} of $\y$. The set of all $t$-subsequences of $\x$ is called the {\em deletion $t$-sphere} centered at $\x$, and denoted by $D_t(\x)$. The set of all $t$-supersequences of $\x$ is called the {\em insertion $t$-sphere} centered at $\x$, and denoted by $I_t(\x)$. The {fixed-length} Levenshtein distance is formally defined as follows.

\begin{definition}[Fixed-length Levenshtein (FLL) distance]
  The fixed-length Levenshtein (FLL) distance between two words $\x, \y \in \mathbb{Z}_m^n$ is the smallest $t$ such that $D_t(\x) \cap D_t(\y) \ne \emptyset$, and is denoted by $d_l(\x,\y)$.
\end{definition}

It is easy to see that $d_l(\x,\y) = t$ if and only if $t$ is the smallest integer such that $\x$ can be transformed to $\y$ by $t$ deletions and $t$ insertions.

\begin{definition}[FLL ball]
For each word $\x \in \mathbb{Z}_m^n$, the \emph{FLL $t$-ball} centered at $\x$ is defined by
    \begin{equation*}
        L_t(\x) \triangleq \{\y \in \mathbb{Z}_m^n  \vert \ d_l(\x,\y) \le t \},
    \end{equation*}
    and $t$ is called the radius.
\end{definition}

The following results on the size of the FLL ball were given in~\cite{Sala13,Barlev22}, and will be useful later. 

\begin{lemma}~\cite[Theorem 1]{Sala13}\label{lemma: sala}
    For all $\x \in \mathbb{Z}_m^n$,
    \begin{equation}\label{equ: ball size}
        |L_1(\x)| = \rho(\x) (mn - n - 1) + 2 - \frac{1}{2}\sum_{i=1}^{a(\x)} s_i^2 + \frac{3}{2}\sum_{i=1}^{a(\x)}s_i - a(\x),
    \end{equation}
    where $s_i$, for $1 \le i \le a(\x)$, is the length of the $i$-th alternating segment of $\x$.
\end{lemma}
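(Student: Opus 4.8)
\emph{The plan is to count $|L_1(\x)|$ directly from the definition of the FLL distance.} Since $d_l(\x,\y)\le 1$ holds exactly when $\x$ and $\y$ admit a common subsequence of length $n-1$, we have
\[
L_1(\x)=\bigcup_{\boldsymbol{w}\in D_1(\x)} I_1(\boldsymbol{w}),
\]
so the problem reduces to computing the size of a union of insertion $1$-spheres. I would first fix the two elementary ingredients. Deleting any symbol from a fixed run of $\x$ produces the same word, and deletions from distinct runs produce distinct words; hence $D_1(\x)$ consists of exactly $\rho(\x)$ words $\boldsymbol{w}_1,\dots,\boldsymbol{w}_{\rho(\x)}$, listed in the order of the runs. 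Moreover each $I_1(\boldsymbol{w}_k)$ is the set of length-$n$ supersequences of a length-$(n-1)$ word, and so has size $n(m-1)+1$. Thus the sum of the individual sphere sizes is $\rho(\x)\bigl(n(m-1)+1\bigr)$, which already produces the dominant term $\rho(\x)(mn-n-1)$ up to the corrections coming from overlaps.

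The core of the argument is to account for these overlaps. The decisive structural observation is that $\x\in I_1(\boldsymbol{w}_k)$ for every $k$, so the spheres are very far from disjoint and a naive inclusion--exclusion would involve all orders. I would instead use a leftmost-charging rule: count $\x$ once, and assign every other $\y\in L_1(\x)$ to the smallest index $k$ with $\y\in I_1(\boldsymbol{w}_k)$, so that
\[
|L_1(\x)|=1+\sum_{k=1}^{\rho(\x)}\bigl|\{\y\neq\x:\ \y\in I_1(\boldsymbol{w}_k),\ \y\notin I_1(\boldsymbol{w}_j)\ \text{for}\ j<k\}\bigr|.
\]
The number of $\y$ newly charged at step $k$ is $n(m-1)$ minus the words already covered by earlier spheres, and the key fact to establish is that $I_1(\boldsymbol{w}_k)$ can share a word other than $\x$ with $I_1(\boldsymbol{w}_j)$ only when runs $j$ and $k$ belong to a common alternating segment of $\x$ (a phenomenon one already sees with $\x=0101$, where the two extreme runs still share the supersequence $1011$). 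Summing the newly charged counts over all runs inside the $i$-th alternating segment then involves all $\binom{s_i}{2}$ pairs of runs there, which is precisely what turns the per-run contributions into a quantity quadratic in the segment length $s_i$.

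The main obstacle, and where I expect the bulk of the work, is this within-segment bookkeeping. The alternating-segment boundaries fall in the interior of long runs, so the runs inside a segment interact without factoring cleanly, and one must show that the newly charged words accumulate over each segment to exactly $-\tfrac12 s_i^2+\tfrac32 s_i$, summing to $-\tfrac12\sum_i s_i^2+\tfrac32\sum_i s_i$, with the segment count $a(\x)$ and the two distinguished boundary segments of lengths $h(\x)$ and $t(\x)$ supplying the residual $-a(\x)$ and the additive constant $2$. I would pin down that constant and confirm that no overlaps have been miscounted by checking the formula against the degenerate configurations, namely single-symbol runs, the differing behaviour of the first and last segments versus interior ones, and the extreme words where $\x$ is constant or fully alternating.
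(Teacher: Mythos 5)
Your overall framework is sound, and it is in fact the same inclusion--exclusion route that the paper attributes to Sala and Dolecek: note that the paper does not prove this lemma at all (it is quoted from \cite{Sala13}), and only remarks that $\rho(\x)(mn-n-1)+2$ is an upper bound from which overcounted sequences must be subtracted. Your decomposition $L_1(\x)=\bigcup_{\boldsymbol{w}\in D_1(\x)}I_1(\boldsymbol{w})$, the counts $|D_1(\x)|=\rho(\x)$ and $|I_1(\boldsymbol{w}_k)|=n(m-1)+1$, and the leftmost-charging identity are all correct. The genuine gap is that everything you label ``the key fact to establish'' and ``the main obstacle'' is precisely the entire content of the lemma, and none of it is proved: (i) that $I_1(\boldsymbol{w}_j)\cap I_1(\boldsymbol{w}_k)$ contains a word other than $\x$ exactly when runs $j$ and $k$ meet a common alternating segment; (ii) that each such pair shares \emph{exactly one} extra word (this needs Levenshtein's classical bound $|I_1(\boldsymbol{u})\cap I_1(\boldsymbol{v})|\le 2$ for distinct equal-length $\boldsymbol{u},\boldsymbol{v}$); and (iii) that the extra words of distinct pairs are pairwise distinct, so that the union count equals the pair count (this follows from the dual bound $|D_1(\x)\cap D_1(\y)|\le 2$, which forbids any $\y\neq\x$ from lying in three of the spheres). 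As written, the proposal is a correct reduction of the lemma to three unproven claims, i.e.\ a restatement rather than a proof.

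Moreover, in the two places where you do commit to specifics, the details are wrong. First, the example: for $\x=0101$ the extreme runs give $\boldsymbol{w}_1=101$ and $\boldsymbol{w}_4=010$, and $I_1(101)\cap I_1(010)=\{0101,1010\}$, so the extra shared supersequence is $1010$, not $1011$; indeed $010\notin D_1(1011)$. Second, the bookkeeping target: the overcounts do not accumulate to $-\tfrac12 s_i^2+\tfrac32 s_i$ per segment, and $h(\x)$, $t(\x)$ do not ``supply $-a(\x)$ and the constant $2$''; the first and last segments play no special role in this lemma. The correct accounting is that $\x$ itself is overcounted $\rho(\x)-1$ times and that segment $i$ contributes $\binom{s_i}{2}$ pairwise-shared extra words (a segment of length $s_i$ meets exactly $s_i$ distinct runs, and two runs can meet at most one common segment), which gives
\[
|L_1(\x)|=\rho(\x)\bigl(n(m-1)+1\bigr)-\bigl(\rho(\x)-1\bigr)-\sum_{i=1}^{a(\x)}\binom{s_i}{2}
=\rho(\x)\,n(m-1)+1-\sum_{i=1}^{a(\x)}\binom{s_i}{2}.
\]
To reconcile this with Eq.~\eqref{equ: ball size} you still need the identity $\sum_{i=1}^{a(\x)}s_i=\rho(\x)+a(\x)-1$, proved by noting that each adjacent pair $x_p\neq x_{p+1}$ lies in exactly one maximal alternating segment, so $\sum_i(s_i-1)=\rho(\x)-1$. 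Your sketch never records this identity, and it is essential in the general $m$-ary case, where consecutive maximal alternating segments can overlap in a position and $\sum_i s_i$ can exceed $n$ (e.g.\ $\x=012$ has segments $01$ and $12$), so the binary shortcuts $\sum_i s_i=n$ and $\rho(\x)+a(\x)=n+1$ are unavailable.
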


Lemma~\ref{lemma: sala} was proved by a careful argument based on the  principle of inclusion and exclusion. Moreover, it can be shown that $\rho(\x)(mn - n - 1) + 2$ is an upper bound of $|L_1(\x)|$, and Eq.~\eqref{equ: ball size} was obtained by subtracting the number of overcounted sequences.

\begin{lemma}~\cite[Lemma 11, Corollary 8, Lemma 12, and Lemma 16]{Barlev22} \label{lemma: partial expectation}
    Let the notations be the same as above. For integers $m, n > 1$, we have
    \begin{align}
        & \underset{\x \in \mathbb{Z}_{m}^{n}}{\mathbb{E}}\left[\sum_{i=1}^{a(\x)} s_{i}\right]=n+(n-2) \cdot \frac{(m-1)(m-2)}{m^{2}},  \nonumber \\
        & \underset{\x \in \mathbb{Z}_{m}^{n}}{\mathbb{E}}[a(\x)]=1+\frac{(n-2)(m-1)(m-2)}{m^{2}}+\frac{n-1}{m}, \nonumber \\
        & \underset{\x \in \mathbb{Z}_{m}^{n}}{\mathbb{E}}[\rho(\x)]=n-\frac{n-1}{m}, \nonumber \\
	& \underset{\x \in \mathbb{Z}_{m}^{n}}{\mathbb{E}}\left[\sum_{i=1}^{a(\x)} s_{i}^{2}\right]=\frac{n\left(4 m^{2}-3 m+2\right)}{m^{2}}+\frac{6 m-4}{m^{2}}-4-\frac{2}{m-1}\left(1-\frac{1}{m^{n}}\right) + \frac{2}{m^n}. \label{eq: ES2}
    \end{align}
\end{lemma}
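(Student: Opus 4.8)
The plan is to establish all four identities by the method of indicator variables and linearity of expectation over a uniformly random $\x \in \mathbb{Z}_m^n$, in the spirit of the proof of Lemma~\ref{lemma: exp h and t}. The one genuinely new feature compared with the binary case is that for $m \ge 3$ two maximal alternating segments may overlap in a single position: this happens precisely at a position $j$ whose three consecutive symbols $x_{j-1}, x_j, x_{j+1}$ are pairwise distinct, since then $\x_{[j-1,j]}$ and $\x_{[j,j+1]}$ are both alternating but $\x_{[j-1,j+1]}$ is not. Keeping track of these overlap positions is what separates the $m$-ary counts from the clean binary identity $\rho(\x) + a(\x) = n+1$.

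For $\mathbb{E}[\rho(\x)]$ I would write $\rho(\x) = 1 + \sum_{i=2}^{n}\mathbf{1}[x_i \ne x_{i-1}]$ and use $\Pr(x_i \ne x_{i-1}) = (m-1)/m$, which gives the third identity immediately. For $\mathbb{E}\big[\sum_i s_i\big]$, observe that this sum counts each position with multiplicity equal to the number of segments containing it, so $\sum_i s_i = \sum_{j=1}^n c_j$ where $c_j$ is that multiplicity. Every position lies in at least one segment, and the overlap analysis above shows that $c_j = 2$ exactly when $2 \le j \le n-1$ and $x_{j-1}, x_j, x_{j+1}$ are pairwise distinct (probability $(m-1)(m-2)/m^2$), while $c_j = 1$ otherwise; summing gives the first identity. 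For $\mathbb{E}[a(\x)]$ I would count segments by their unique leftmost positions, i.e.\ $a(\x) = \sum_{i=1}^n S_i$ with $S_i$ the indicator that a segment starts at $i$. A short case analysis shows $S_1 = 1$ always, $S_n = \mathbf{1}[x_n = x_{n-1}]$, and for $2 \le i \le n-1$ that $S_i = 1$ iff either $x_i = x_{i-1}$ or $x_{i-1}, x_i, x_{i+1}$ are pairwise distinct; these two events are disjoint, so $\Pr(S_i = 1) = 1/m + (m-1)(m-2)/m^2$, and summing yields the second identity.

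The fourth identity, $\mathbb{E}\big[\sum_i s_i^2\big]$, is the main obstacle and I would treat it separately. Writing $s_i^2 = s_i + 2\binom{s_i}{2}$ reduces it to $\mathbb{E}\big[\sum_i s_i\big]$, already computed, plus twice $\mathbb{E}\big[\sum_i \binom{s_i}{2}\big]$. The key observation is that $\sum_i \binom{s_i}{2}$ counts pairs of positions $j < k$ lying in a common segment, and since two distinct maximal segments meet in at most one point, each such pair lies in exactly one segment; hence $\sum_i \binom{s_i}{2}$ equals the number of pairs $j < k$ for which $\x_{[j,k]}$ is alternating. For a window of length $\ell = k - j + 1 \ge 2$ the probability that $\x_{[j,k]}$ is alternating is $m(m-1)/m^{\ell} = (m-1)/m^{\ell-1}$, so
\[
\mathbb{E}\Big[\sum_i \binom{s_i}{2}\Big] = \sum_{\ell=2}^{n} (n-\ell+1)\,\frac{m-1}{m^{\ell-1}}.
\]

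The remaining and most tedious step is to evaluate this finite sum in closed form — a weighted geometric series of the shape $\sum_u (n-u)\,m^{-u}$ — and combine it with the first identity. This is where the precise constant and the $1/m^n$ tail terms of Eq.~\eqref{eq: ES2} are produced, and getting the boundary contributions (the endpoints $i=1,n$ and the truncation at $\ell = n$) exactly right is the part demanding the most care. As a sanity check, combining the $n$-linear parts of the two pieces already yields the leading term $n(4m^2-3m+2)/m^2$, which is reassuring before grinding through the lower-order terms.
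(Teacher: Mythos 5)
Your proposal is correct, but it cannot be compared to a proof in the paper because the paper gives none: Lemma~\ref{lemma: partial expectation} is imported from~\cite{Barlev22} (Lemma 11, Corollary 8, Lemma 12, and Lemma 16), with only a remark correcting a missing $\frac{2}{m^n}$ term in the last identity. Your blind attempt is thus a self-contained replacement for the cited proofs, and it holds up. The structural facts you rely on are all true and are exactly the right ones for $m\ge 3$: two distinct maximal alternating segments share at most one position; an interior position $j$ is covered by two segments precisely when $x_{j-1},x_j,x_{j+1}$ are pairwise distinct (probability $(m-1)(m-2)/m^2$) and by one segment otherwise; a maximal segment starts at an interior position $i$ iff $x_{i-1}=x_i$ or the triple $x_{i-1},x_i,x_{i+1}$ is pairwise distinct, and these events are disjoint; and a pair $j<k$ lies in a (necessarily unique) common maximal segment iff $\x_{[j,k]}$ is alternating, which for window length $\ell$ has probability $(m-1)/m^{\ell-1}$. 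The reduction $s_i^2=s_i+2\binom{s_i}{2}$ then turns the hardest identity into a window count plus the first identity. The one step you leave unfinished does close: writing $u=\ell-1$,
\[
(m-1)\sum_{u=1}^{n-1}(n-u)\,m^{-u} \;=\; n-\frac{m}{m-1}+\frac{m}{(m-1)m^{n}},
\]
and adding twice this to $\mathbb{E}\bigl[\sum_i s_i\bigr]$ gives, after expanding $\frac{m}{m-1}=1+\frac{1}{m-1}$ and $\frac{m}{(m-1)m^n}=\frac{1}{m^n}+\frac{1}{(m-1)m^n}$, precisely the right-hand side of Eq.~\eqref{eq: ES2}, \emph{including} the $+\frac{2}{m^n}$ term that the paper's remark notes was dropped in~\cite{Barlev22}. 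That your independent window-counting derivation reproduces the corrected formula is a meaningful consistency check: the error in~\cite{Barlev22} arose in a chain of algebraic manipulations, whereas your argument produces the tail terms from a single closed-form geometric sum and is arguably more robust.
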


\begin{remark}
  Eq.~\eqref{eq: ES2} was proved in~\cite[Lemma 16]{Barlev22}. However, the last term $\frac{2}{m^n}$ was missing by mistake during the calculation\footnote{In~\cite{Barlev22}, on Page 2334, left column, from the fourth equation to the fifth equation, the last term of the fifth equation should not be $\frac{2}{q-1}$ but $\frac{2q}{q-1}$. This error spreads out in the following calculations therein.}. Theorem~\ref{thm: barlev22} has also been corrected accordingly.
\end{remark}

\begin{theorem}~\cite[Theorem 13]{Barlev22}\label{thm: barlev22}
    For integers $m,n > 1$, we have 
    \begin{equation}\label{equ: expectation}
      \underset{\x \in \mathbb{Z}_{m}^{n}}{\mathbb{E}}[|L_1(\x)|] = n^2 \left(m + \frac{1}{m} - 2\right) + 2 - \frac{n}{m} + \frac{m^{n-1} - 1}{m^{n-1}(m - 1)}.
    \end{equation}
\end{theorem}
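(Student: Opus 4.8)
The plan is to take expectations on both sides of the closed form in Lemma~\ref{lemma: sala} and substitute the four moments recorded in Lemma~\ref{lemma: partial expectation}. Since Eq.~\eqref{equ: ball size} writes $|L_1(\x)|$ as a linear combination of $\rho(\x)$, $\sum_{i=1}^{a(\x)} s_i^2$, $\sum_{i=1}^{a(\x)} s_i$, and $a(\x)$ together with the constant $2$, linearity of expectation gives
\begin{equation*}
  \mathbb{E}[|L_1(\x)|] = (mn - n - 1)\,\mathbb{E}[\rho(\x)] + 2 - \tfrac{1}{2}\,\mathbb{E}\Big[\textstyle\sum_{i} s_i^2\Big] + \tfrac{3}{2}\,\mathbb{E}\Big[\textstyle\sum_{i} s_i\Big] - \mathbb{E}[a(\x)],
\end{equation*}
where $\x$ is drawn uniformly from $\mathbb{Z}_m^n$. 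Every expectation on the right is now explicit, so the theorem reduces to an algebraic simplification.

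First I would expand the product $(mn - n - 1)\,\mathbb{E}[\rho(\x)] = (mn - n - 1)\big(n - \tfrac{n-1}{m}\big)$, which is the sole source of a quadratic-in-$n$ contribution. Its $n^2$-coefficient is $(m-1)\big(1 - \tfrac1m\big) = \tfrac{(m-1)^2}{m} = m + \tfrac1m - 2$, already matching the leading term of Eq.~\eqref{equ: expectation}; the remaining pieces of this product contribute only corrections that are linear and constant in $n$. I would then insert the expectations of $\sum_i s_i$, $a(\x)$, and $\sum_i s_i^2$, carrying the factors $\tfrac32$, $-1$, and $-\tfrac12$ respectively, and collect everything coefficient by coefficient in powers of $n$.

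A structural observation that organizes the bookkeeping is that $\mathbb{E}[\sum_i s_i^2]$ is the only one of the four moments whose contribution involves the denominator $m-1$ or the exponentially small factor $m^{-n}$; the other three are polynomials in $n$ with coefficients that are rational in $m$ with denominators that are pure powers of $m$. Hence the entire residual term $\tfrac{m^{n-1}-1}{m^{n-1}(m-1)}$ of Eq.~\eqref{equ: expectation} must originate from $-\tfrac12$ times the non-polynomial part $-\tfrac{2}{m-1}(1 - m^{-n}) + 2m^{-n}$ of $\mathbb{E}[\sum_i s_i^2]$. Indeed, a direct check gives $-\tfrac12\big(-\tfrac{2}{m-1}(1 - m^{-n}) + 2m^{-n}\big) = \tfrac{1}{m-1} - \tfrac{1}{(m-1)m^{n-1}} = \tfrac{m^{n-1}-1}{m^{n-1}(m-1)}$, so this whole block of the final formula is accounted for in one step.

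The main obstacle is computational rather than conceptual: after substitution one must confirm that the polynomial-in-$n$ contributions of the four moments collapse so that the $n^2$-coefficient is exactly $m + \tfrac1m - 2$ and the $n^1$-coefficient is exactly $-\tfrac1m$, while all terms that are constant in $n$ (those from $\tfrac32\mathbb{E}[\sum_i s_i]$, from $-\mathbb{E}[a(\x)]$, from the constant part of $(mn-n-1)\mathbb{E}[\rho(\x)]$, and from the constant part of $-\tfrac12\mathbb{E}[\sum_i s_i^2]$, together with the explicit $+2$ and the constant part $\tfrac{1}{m-1}$ of the residual) combine to reproduce $2 + \tfrac{1}{m-1}$. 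Tracking these cancellations by hand is the delicate part, and it is precisely here that the missing $\tfrac{2}{m^n}$ term flagged in the Remark enters, since it is exactly what corrects the residual term, and hence the constant and exponential orders, relative to the previously published expression.
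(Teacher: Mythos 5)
Your proposal is correct and takes exactly the approach the paper indicates: apply linearity of expectation to Eq.~\eqref{equ: ball size} from Lemma~\ref{lemma: sala}, substitute the four moments of Lemma~\ref{lemma: partial expectation}, and simplify. Your bookkeeping checks out --- the $n^2$-coefficient $(m-1)^2/m = m + \tfrac{1}{m} - 2$ comes solely from $(mn-n-1)\,\mathbb{E}[\rho(\x)]$, the linear terms collapse to $-\tfrac{n}{m}$, the constants sum to $2$, and the residual $\tfrac{1}{m-1} - \tfrac{1}{(m-1)m^{n-1}}$ arises exactly as you say from the non-polynomial part of the (corrected) $\mathbb{E}\bigl[\sum_i s_i^2\bigr]$.
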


 Note that the proof follows from Lemma~\ref{lemma: sala}, Lemma~\ref{lemma: partial expectation} and the linearity of expectation.


A {\em martingale} is a sequence of real random variables $Z_0, \dots, Z_n$ with finite expectation such that for each $0 \le i < n$, 
\[
  \E{Z_{i+1} \middle \vert Z_i, Z_{i-1}, \dots, Z_0} = Z_i.  
\]

A classical martingale named \emph{Doob martingale} (see {for example}~\cite{AS16}) will be used in this paper. Let $X_1, \dots, X_n$ be {the} underlying random variables (not necessarily independent) and $f$ be a function over $X_1, \dots, X_n$. The Doob martingale $Z_0, \dots, Z_n$ is defined by
\begin{equation*}
    \begin{aligned}
        Z_0 & = \E{f(X_1, \dots, X_n)}; \\
        Z_i & = \E{f(X_1, \dots, X_n) \middle \vert X_1, \dots, X_i} \text{ for } i \in [n].
    \end{aligned}
\end{equation*}
In other words, $Z_i$ is defined by the expected value of $f$ after exposing $X_1, \dots, X_i$.
The following classical {result~\cite{AS16}} plays a key role in {the proof of our main results.}

\begin{theorem}[Azuma's inequality]\label{thm: azuma}
    Let $Z_0,Z_1,\dots,Z_n$ be a martingale such that for each $1 \le i \le n$,
    \begin{equation*}
        |Z_i - Z_{i-1}| \le c_i.
    \end{equation*}
    Then for every $\lambda > 0$, {we have}
    \begin{equation*}
        \Pr(Z_n - Z_0 \ge \lambda) \le \exp \left( \frac{-\lambda^2}{2(c_1^2 + \dots + c_n^2)}\right),
    \end{equation*}
    and 
    \begin{equation*}
        \Pr(Z_n - Z_0 \le -\lambda) \le \exp \left( \frac{-\lambda^2}{2(c_1^2 + \dots + c_n^2)}\right).
    \end{equation*}
\end{theorem}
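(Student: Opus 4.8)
The plan is to prove Azuma's inequality by the exponential moment method, a Chernoff-type argument adapted to martingales. First I would write $Z_n - Z_0 = \sum_{i=1}^n D_i$, where $D_i = Z_i - Z_{i-1}$ denotes the $i$-th martingale difference, so that $|D_i| \le c_i$ and, by the martingale property, $\E{D_i \mid Z_0, \dots, Z_{i-1}} = 0$ (since $\E{Z_i \mid Z_0,\dots,Z_{i-1}} = Z_{i-1}$ and $Z_{i-1}$ is determined by the conditioning). For any parameter $\alpha > 0$, Markov's inequality applied to the nonnegative random variable $e^{\alpha(Z_n - Z_0)}$ gives $\Pr(Z_n - Z_0 \ge \lambda) \le e^{-\alpha \lambda}\, \E{e^{\alpha(Z_n - Z_0)}}$, so the task reduces to controlling the moment generating function $\E{\exp(\alpha \sum_{i=1}^n D_i)}$.

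The central step is a conditional version of Hoeffding's lemma: for a random variable $D$ with conditional mean zero given the past and with $|D| \le c$, one has a conditional bound $e^{\alpha^2 c^2 / 2}$ on $\E{e^{\alpha D}}$. I would establish this by convexity of $x \mapsto e^{\alpha x}$ on the interval $[-c, c]$: writing any $x \in [-c,c]$ as a convex combination of the endpoints $-c$ and $c$ yields $e^{\alpha x} \le \frac{c-x}{2c}e^{-\alpha c} + \frac{c+x}{2c}e^{\alpha c}$. Taking conditional expectation and using that the conditional mean of $D$ vanishes collapses the linear term, leaving $\cosh(\alpha c)$, which is bounded by $e^{\alpha^2 c^2/2}$ via a term-by-term Taylor-series comparison.

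With this lemma in hand, I would peel off the differences one at a time using the tower property for conditional expectations. Conditioning on $Z_0, \dots, Z_{n-1}$, the factor $e^{\alpha D_n}$ has conditional expectation at most $e^{\alpha^2 c_n^2/2}$, while $\exp(\alpha \sum_{i<n} D_i)$ is determined by the conditioning and so factors out; iterating this downward gives $\E{e^{\alpha(Z_n - Z_0)}} \le \prod_{i=1}^n e^{\alpha^2 c_i^2 / 2} = \exp(\tfrac{\alpha^2}{2}\sum_{i=1}^n c_i^2)$. Substituting back into the Markov bound and optimizing the free parameter by choosing $\alpha = \lambda / \sum_{i=1}^n c_i^2$ yields the upper-tail estimate $\exp(-\lambda^2/(2\sum_{i=1}^n c_i^2))$. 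The lower-tail bound then follows immediately by applying the upper-tail bound to the martingale $-Z_0, \dots, -Z_n$, whose differences satisfy $\lvert -D_i\rvert \le c_i$. The main obstacle I anticipate is the conditional Hoeffding estimate: carrying the convexity bound correctly through a \emph{conditional} expectation and justifying the filtration manipulations in the tower-property iteration, since the martingale differences need not be independent and the estimate must hold after conditioning on the entire past.
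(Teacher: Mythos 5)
The paper does not prove this statement at all: Azuma's inequality is quoted as a classical result from the cited reference (Alon--Spencer, \emph{The Probabilistic Method}), so there is no in-paper proof to compare against. Your proposal is correct and is essentially the standard argument found in that reference --- Markov's inequality applied to $e^{\alpha(Z_n - Z_0)}$, the conditional Hoeffding/convexity bound $\mathbb{E}\left[ e^{\alpha D_i} \,\middle\vert\, Z_0,\dots,Z_{i-1} \right] \le \cosh(\alpha c_i) \le e^{\alpha^2 c_i^2/2}$, iteration via the tower property, optimization at $\alpha = \lambda / \sum_{i=1}^n c_i^2$, and the lower tail by negating the martingale --- so it fills the gap exactly as the literature does.
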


\section{The size distribution {of the FLL balls with radius one}}\label{sec: distribution}

In this section, we discuss the size distribution of the FLL balls with radius one. We start with the binary case and then deal with the general $m$-ary case.

\subsection{The binary case} \label{sec: binary case}

Let $n, n'$ be positive integers. In order to estimate $|L_1(\x)|$, we define the following notation: for each $\y \in \mathbb{Z}_2^{n'}$, define 

\begin{equation}\label{equ: f_n}
    f_n(\y) = \rho(\y)n - \frac{1}{2}\sum_{i = 1}^{a(\y)}s_i^2,
\end{equation}
where $s_i$ is the length of {the} $i$-th alternating segment of $\y$ for $1 \le i \le a(\y)$.
Note that $\sum_{i=1}^{a(\y)}s_i = n'$ for all $\y \in \mathbb{Z}_2^{n'}$, and $\rho(\y) + a(\y) = n'+1$.
Then by Eq.~\eqref{equ: ball size}, we have $|L_1(\x)| = f_n(\x) + \frac{n}{2} + 1$ for each $\x \in \mathbb{Z}_2^n$.
Therefore, it {suffices} to find the distribution of $f_n(\x)$ for $\x \in \mathbb{Z}_2^n$. 
To this end, we express {$f_n(\y)$} by two partial values $f_n(\y_{[1,i]})$ and $f_n(\y_{[i+1,n']})$ in the following lemma.

\begin{lemma}\label{lemma: f_n partition}
  Let {$n, n'$ be positive} integers.
    For each $i \in [n'-1]$ and $\y \in \mathbb{Z}_2^{n'}$, we have 

    \begin{equation*}
        f_n(\y) = 
        \begin{cases}
            f_n(\y_{[1,i]}) + f_n(\y_{[i+1,n']}) - n ,  & \text{if $y_i = y_{i+1}$},\\

            f_n(\y_{[1,i]}) + f_n(\y_{[i+1,n']}) - t(\y_{[1,i]}) h(\y_{[i+1, n']}) ,  & \text{if $y_i \ne y_{i+1}$}.
        \end{cases}
    \end{equation*}
\end{lemma}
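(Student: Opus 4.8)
The plan is to track separately how the two ingredients of $f_n$, namely the run count $\rho$ and the sum of squared alternating-segment lengths, behave when $\y$ is cut between positions $i$ and $i+1$. Each ingredient splits additively up to a correction term determined solely by whether $y_i = y_{i+1}$, and assembling these corrections will produce the two stated cases. Throughout I write $\sigma(\z) = \sum_{j=1}^{a(\z)} s_j^2$ for the sum of squared alternating-segment lengths of a word $\z$ (the index $j$ avoids clashing with the cut position $i$), so that $f_n(\z) = \rho(\z)\,n - \tfrac12\,\sigma(\z)$.

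First I would handle the run count. The only run that the cut can affect is the one straddling positions $i$ and $i+1$. If $y_i = y_{i+1}$, then these positions lie in a common run, which the cut severs into two, so $\rho(\y) = \rho(\y_{[1,i]}) + \rho(\y_{[i+1,n']}) - 1$. If $y_i \ne y_{i+1}$, the cut falls on an existing run boundary and no run is split, so $\rho(\y) = \rho(\y_{[1,i]}) + \rho(\y_{[i+1,n']})$.

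Next, and this is where the real work lies, I would analyze the alternating segments, using the fact (visible in the worked example) that a new alternating segment of $\y$ begins exactly at each position $j$ with $y_j = y_{j-1}$. When $y_i = y_{i+1}$, the cut coincides with an alternating-segment boundary: since the alternation cannot continue past a repeat, the segment containing position $i$ ends at $i$, and a fresh segment starts at $i+1$. Hence the alternating segments of $\y$ are precisely those of $\y_{[1,i]}$ together with those of $\y_{[i+1,n']}$, and the squared lengths add, giving $\sigma(\y) = \sigma(\y_{[1,i]}) + \sigma(\y_{[i+1,n']})$. When $y_i \ne y_{i+1}$, the alternation continues across the cut, so by maximality there is a single alternating segment $\y_{[a,b]}$ of $\y$ with $a \le i < i+1 \le b$ that the cut divides into the last segment $\y_{[a,i]}$ of $\y_{[1,i]}$, of length $t(\y_{[1,i]})$, and the first segment $\y_{[i+1,b]}$ of $\y_{[i+1,n']}$, of length $h(\y_{[i+1,n']})$. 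Writing $t = t(\y_{[1,i]})$ and $h = h(\y_{[i+1,n']})$, every other segment is unchanged, while this one contributes $(t+h)^2$ to $\sigma(\y)$ but $t^2$ and $h^2$ to the two halves, so $\sigma(\y) = \sigma(\y_{[1,i]}) + \sigma(\y_{[i+1,n']}) + 2th$.

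Finally I would substitute these two accountings into $f_n(\z) = \rho(\z)\,n - \tfrac12\,\sigma(\z)$. In the case $y_i = y_{i+1}$, the $-1$ in the run count produces a term $-n$, yielding $f_n(\y) = f_n(\y_{[1,i]}) + f_n(\y_{[i+1,n']}) - n$; in the case $y_i \ne y_{i+1}$, the cross term $2th$ contributes $-\tfrac12(2th) = -th$, yielding $f_n(\y) = f_n(\y_{[1,i]}) + f_n(\y_{[i+1,n']}) - t(\y_{[1,i]})\,h(\y_{[i+1,n']})$. I expect the delicate step to be justifying the clean splitting of alternating segments at the cut, specifically confirming that a single straddling segment divides exactly into the last segment of the left half and the first segment of the right half; this is precisely what forces the correction to be $t \cdot h$ rather than some other interaction between the two halves, and it is the crux that makes the lemma work.
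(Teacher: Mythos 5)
Your proposal is correct and follows essentially the same route as the paper's proof: decompose $f_n$ into the run count and the sum of squared alternating-segment lengths, observe that both split additively at the cut (with $\rho$ losing $1$ when $y_i = y_{i+1}$), and in the case $y_i \ne y_{i+1}$ extract the cross term $\tfrac12\left[\left(t+h\right)^2 - t^2 - h^2\right] = th$ from the single straddling segment. Your write-up just makes explicit the segment-splitting justification that the paper leaves implicit.
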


\begin{proof}
    It is important to use the additive behavior of $f_n(\y)$ in Eq.~\eqref{equ: f_n}. 
    The first case follows from the equations of $\rho(\y) = \rho(\y_{[1,i]}) + \rho(\y_{[i+1,n']}) - 1$ and $a(\y) = a(\y_{[1,i]}) + a(\y_{[i+1,n']})$. 
    
    When $y_{i} \ne y_{i+1}$, we have $\rho(\y) = \rho(\y_{[1,i]}) + \rho(\y_{[i+1,n']})$. Then the difference of $f_n(\y)$ and $f_n(\y_{[1,i]}) + f_n(\y_{[i+1,n']})$ is given by 
    \[
        \frac{1}{2} \left[ \left(t(\y_{[1,i]}) + h(\y_{[i+1, n']})\right)^2  - t(\y_{[1,i]})^2 - h(\y_{[i+1, n']})^2 \right] = t(\y_{[1,i]}) h(\y_{[i+1, n']}).
    \]
    The proof is then completed.

\end{proof}

\begin{corollary}\label{corollary: f difference}
    Let $n, n'$ be positive integers.
    For all $\y \in \mathbb{Z}_2^{n'}$, we have 
    \begin{equation*}
        f_n(\y) - f_n(\y_{[1,n'-1]}) = 
        \begin{cases}
	  -\frac{1}{2},  & \text{if } y_{n'-1} = y_{n'}, \\
	  n - \frac{1}{2} - t(\y_{[1, n'-1]}),  & \text{if } y_{n'-1} \ne y_{n'}.
        \end{cases}
    \end{equation*}
\end{corollary}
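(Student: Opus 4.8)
The plan is to read off this corollary as the single special case $i = n'-1$ of Lemma~\ref{lemma: f_n partition}. The point of choosing the split point at the penultimate position is that the right-hand block $\y_{[i+1,n']} = \y_{[n',n']}$ then degenerates to the one-letter word $(y_{n'})$, on which $f_n$, $h$, and $t$ are all trivial to evaluate. Once those values are in hand, the two cases of the lemma (according to whether $y_{n'-1}=y_{n'}$ or $y_{n'-1}\neq y_{n'}$) turn directly into the two cases asserted here.

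First I would record the relevant quantities on a single symbol. For a word $\z \in \mathbb{Z}_2^1$ consisting of one letter we have $\rho(\z)=1$ and exactly one alternating segment, of length $1$, so $\sum_{i=1}^{a(\z)} s_i^2 = 1$ and hence $f_n(\z) = \rho(\z)\,n - \tfrac12 \cdot 1 = n - \tfrac12$. Likewise the first and last (and only) alternating segments coincide, giving $h(\z)=t(\z)=1$. These are the only ingredients needed beyond the lemma itself.

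Next I would substitute $i=n'-1$ into Lemma~\ref{lemma: f_n partition}, so that $\y_{[i+1,n']}=\y_{[n',n']}$ is the single symbol $(y_{n'})$. In the case $y_{n'-1}=y_{n'}$ the lemma gives $f_n(\y) = f_n(\y_{[1,n'-1]}) + f_n(\y_{[n',n']}) - n$; inserting $f_n(\y_{[n',n']}) = n-\tfrac12$ yields $f_n(\y) - f_n(\y_{[1,n'-1]}) = -\tfrac12$. In the case $y_{n'-1}\neq y_{n'}$ the lemma gives $f_n(\y) = f_n(\y_{[1,n'-1]}) + f_n(\y_{[n',n']}) - t(\y_{[1,n'-1]})\,h(\y_{[n',n']})$; using $f_n(\y_{[n',n']})=n-\tfrac12$ and $h(\y_{[n',n']})=1$ gives $f_n(\y) - f_n(\y_{[1,n'-1]}) = n - \tfrac12 - t(\y_{[1,n'-1]})$. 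These match the two claimed expressions exactly.

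I do not expect a genuine obstacle here, since the statement is essentially a clean corollary of the partition lemma. The only point demanding minor care is the boundary convention: the expression $f_n(\y)-f_n(\y_{[1,n'-1]})$ and the symbols $y_{n'-1},y_{n'}$ presuppose $n'\ge 2$, which is precisely the regime in which the hypothesis $i\in[n'-1]$ of Lemma~\ref{lemma: f_n partition} is nonvacuous, so the specialization is legitimate throughout.
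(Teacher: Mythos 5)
Your proposal is correct and follows exactly the paper's own route: specialize Lemma~\ref{lemma: f_n partition} to $i = n'-1$, note that the single-symbol word $(y_{n'})$ has $f_n(y_{n'}) = n - \tfrac{1}{2}$ and $h = 1$, and read off the two cases. Your version just spells out the single-symbol computation and the $n' \ge 2$ boundary point a bit more explicitly than the paper does.
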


\begin{proof}
  Note that by the definition of $f_n(\y)$ in Eq.~\eqref{equ: f_n}, we have $f_n(y_{n'}) = n - \frac{1}{2}$. By setting $i = n'-1$, the result then follows from Lemma~\ref{lemma: f_n partition}. 
\end{proof}

Now we are ready to discuss the distribution of $|L_1(\x)|$ for uniformly distributed $\x \in \mathbb{Z}_2^n$. Since the case that $n \le 3$ is trivial, we now focus on the cases that $n$ is large.

\begin{theorem}\label{thm: binary distribution}
    Let $n > 3$ be an integer and $x_1, \dots, x_n$ be independent random variables such that $\Pr(x_i = 0) = \Pr(x_i = 1) = \frac{1}{2}$ for $i \in [n]$. Then for the word $\x = x_1\cdots x_n$, we have
    \begin{equation}\label{equ: upper tail}
        \Pr \left( {|L_1(\x)| - \underset{\x \in \mathbb{Z}_2^n}{\mathbb{E}} \left[ |L_1(\x)| \right] \ge c n \sqrt{n-1}}  \right) \le e^{-2c^2},
    \end{equation}
    and 
    \begin{equation}\label{equ: lower tail}
        \Pr \left( {|L_1(\x)| - \underset{\x \in \mathbb{Z}_2^n}{\mathbb{E}} \left[ |L_1(\x)| \right] \le - c n \sqrt{n-1}}  \right) \le e^{-2c^2},
    \end{equation}
    where $\underset{\x \in \mathbb{Z}_2^n}{\mathbb{E}} \left[ |L_1(\x)| \right] = \frac{n^2}{2} - \frac{n}{2} - \frac{1}{2^{n-1}} + 3$, and $c$ is a positive constant.
\end{theorem}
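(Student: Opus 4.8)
The plan is to reduce the statement to a concentration inequality for $f_n(\x)$ and then run Azuma's inequality on a Doob martingale that exposes $x_1,\dots,x_n$ one bit at a time. Since $|L_1(\x)| = f_n(\x)+\frac n2+1$ for every $\x\in\mathbb{Z}_2^n$ (Lemma~\ref{lemma: sala}), centering kills the additive constant, so $|L_1(\x)|-\E{|L_1(\x)|}=f_n(\x)-\E{f_n(\x)}$ and it suffices to bound the tails of $f_n(\x)$; the asserted value of the mean is Theorem~\ref{thm: barlev22} specialized to $m=2$. I then take the Doob martingale $Z_i=\E{f_n(\x)\mid x_1,\dots,x_i}$ for $0\le i\le n$, so that $Z_0=\E{f_n(\x)}$, $Z_n=f_n(\x)$, and $Z_n-Z_0$ is exactly the centered quantity in \eqref{equ: upper tail}--\eqref{equ: lower tail}.

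The heart of the argument is to bound the increments $|Z_i-Z_{i-1}|$. Because $x_i$ is an independent fair bit, $Z_{i-1}$ is the average of the two values of $Z_i$ obtained by setting $x_i=x_{i-1}$ or $x_i\neq x_{i-1}$, whence $|Z_i-Z_{i-1}|=\frac12\,|Z_i^{\neq}-Z_i^{=}|$ in the obvious notation. To compute $Z_i$ for a frozen prefix I apply Lemma~\ref{lemma: f_n partition} at position $i$: conditioning on $x_1,\dots,x_i$ fixes $f_n(\x_{[1,i]})$ and $t(\x_{[1,i]})$, the suffix term $f_n(\x_{[i+1,n]})$ contributes a constant that is independent of the prefix, and the coupling term has conditional expectation $\frac n2+\frac12\,t(\x_{[1,i]})\,\E{h(\x_{[i+1,n]})}$. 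Here Lemma~\ref{lemma: exp h and t} supplies both the value $\E{h(\x_{[i+1,n]})}=2-2^{-(n-i-1)}$ and the crucial fact that it is unchanged by conditioning on the already-exposed first symbol of the suffix. Feeding this expression for $Z_i$ into Corollary~\ref{corollary: f difference}, which records how $f_n(\x_{[1,i]})$ and $t(\x_{[1,i]})$ react to appending the last prefix bit, a short cancellation collapses everything to $Z_i^{\neq}-Z_i^{=}=n-t(\x_{[1,i-1]})\bigl(2-2^{-(n-i)}\bigr)$. The endpoint $i=n$ (empty suffix) is handled directly from Corollary~\ref{corollary: f difference} and gives the same formula under the convention $2-2^0=1$, while $i=1$ gives $Z_1=Z_0$ by the complementation symmetry $f_n(\x)=f_n(\bar\x)$, so that $c_1=0$.

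Writing $\beta_i=2-2^{-(n-i)}\in[1,2)$ and $t=t(\x_{[1,i-1]})\in\{1,\dots,i-1\}$, I obtain $|Z_i-Z_{i-1}|=\frac12|n-\beta_i t|$. To extract a deterministic Azuma constant I maximize over the admissible values of $t$: the map $t\mapsto|n-\beta_i t|$ is convex, so its maximum is attained at an endpoint, and the inequality $\beta_i\,i<2n$ (valid since $\beta_i<2$ and $i\le n$) forces the endpoint $t=1$ to dominate, giving $c_i=\frac12(n-\beta_i)\le\frac{n-1}{2}$ for every $i$. Hence $\sum_{i=1}^n c_i^2\le (n-1)\bigl(\tfrac{n-1}{2}\bigr)^2=\frac{(n-1)^3}{4}\le\frac{n^2(n-1)}{4}$, and Theorem~\ref{thm: azuma} with $\lambda=c\,n\sqrt{n-1}$ yields both tail bounds, since $\exp\!\bigl(-\lambda^2/(2\sum_i c_i^2)\bigr)\le\exp\!\bigl(-2\lambda^2/(n^2(n-1))\bigr)=e^{-2c^2}$. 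I expect the main obstacle to be the increment computation of the second paragraph — in particular getting the conditional expectation of the coupling term right through Lemma~\ref{lemma: exp h and t} — together with verifying that the worst case over $t$ really is the uniform bound $(n-1)/2$ rather than a quantity growing faster in $n$; this refinement is precisely what produces the $cn\sqrt{n-1}$ scale instead of the much weaker $cn^2\sqrt{n}$ that a naive single-coordinate Lipschitz estimate for $f_n$ would give.
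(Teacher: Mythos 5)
Your proposal is correct and follows essentially the same route as the paper: the same reduction $|L_1(\x)| = f_n(\x) + \frac{n}{2} + 1$, the same Doob martingale exposing one bit at a time, the same use of Lemma~\ref{lemma: f_n partition}, Corollary~\ref{corollary: f difference} and Lemma~\ref{lemma: exp h and t} to obtain the increments, and the same application of Azuma's inequality with $\lambda = cn\sqrt{n-1}$. The only cosmetic differences are that you cancel the prefix-independent constants via the averaging identity $Z_{i-1}=\frac{1}{2}\left(Z_i^{=}+Z_i^{\neq}\right)$ instead of computing $Z_0$ and $\E{f_n(\x_{[i+1,n]})}$ explicitly, and your per-step bound $\frac{n-1}{2}$ is marginally sharper than the paper's $\frac{n}{2}$, neither of which changes anything downstream.
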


\begin{proof}
    We define the Doob martingale $Z_0 = \E{f_n(\x)}$, and $Z_i = \E{f_n(\x) \vert \x_{[1,i]}}$ by exposing one $x_i$ at a time for $i \in [n]$.
    Clearly, by Eq.~\eqref{equ: f_n} and Lemma~\ref{lemma: partial expectation}, we have $Z_n = f_n(\x)$, and
    \begin{eqnarray*}\label{eqn-z0}
     Z_0  & = & \E{\rho(\x) n - \frac{1}{2} \sum_{i=1}^{a(\x)} s_i^2 } \\
      & = & n \left(n - \frac{n-1}{2}\right) \\
      &  & \quad - \frac{1}{2} \left[ \frac{n (4\cdot 2^2 - 3 \cdot 2 +2)}{2^2} +\frac{6 \cdot 2 - 4}{2^2} - 4 - 2 \left(1 - \frac{1}{2^n}\right) + \frac{2}{2^n} \right] \\ 
      & = & \frac{n^2}{2}- n + 2 - \frac{1}{2^{n-1}}. 
    \end{eqnarray*}
    
For every $1 \le i < n$, we have
    \begin{eqnarray*}
   Z_i & = & \E{f_n(\x) \middle\vert \x_{[1,i]}} \\
       & = & \E{f_n(\x) \middle\vert \x_{[1,i]}, x_{i} = x_{i+1}} \Pr(x_i = x_{i+1}) \\
        &  & \quad + \E{f_n(\x) \middle\vert \x_{[1,i]}, x_{i} \ne x_{i+1}} \Pr(x_i \ne x_{i+1}) .
    \end{eqnarray*}
Then by Lemma~\ref{lemma: f_n partition}, we have
    \begin{eqnarray*}
      \lefteqn{Z_i = \frac{1}{2} \E{ f_n(\x_{[1,i]}) + f_n(\x_{[i + 1,n]}) -n \middle\vert \x_{[1,i]}, x_i = x_{i+1}}} \\
    &  & \quad + \frac{1}{2} \E{ f_n(\x_{[1,i]}) + f_n(\x_{[i + 1,n]}) -t(\x_{[1,i]})h(\x_{[i + 1,n]}) \middle\vert \x_{[1,i]}, x_i \ne x_{i+1}} \\
    & = &  f_n(\x_{[1,i]}) + \E{f_n(\x_{[i + 1,n]})}  - \frac{n}{2} - \frac{1}{2}t(\x_{[1,i]})\E{h(\x_{[i + 1,n]}) \middle\vert x_i \ne x_{i+1}}.
    \end{eqnarray*}
    
    Note that by Eq.~\eqref{equ: f_n} and Lemma~\ref{lemma: partial expectation}, we have
    \begin{eqnarray*}
     \E{f_n(\x_{[i + 1,n]})} & = & \E{\rho(\x_{[i+1,n]}) n - \frac{1}{2} \sum_{i=1}^{a(\x_{[i+1,n]})} s_i^2 } \\
       & = &  n \cdot \frac{n-i+1}{2} - \frac{1}{2}\left[3(n-i) - 4 + \frac{1}{2^{n-i-2}}\right] \\
        & = & \frac{n^2}{2} - \frac{in}{2} + 2 - \frac{3}{2}(n - i) + \frac{n}{2} - \frac{1}{2^{n-i-1}} , 
    \end{eqnarray*}
    and by Lemma~\ref{lemma: exp h and t}, we have
    \[
        \E{h(\x_{[i + 1,n]}) \middle\vert x_i \ne x_{i+1}} = 2 - \frac{1}{2^{n-i-1}}.
    \]
    It then follows that
    \begin{equation}\label{eq:Zi}
        Z_i =  f_n(\x_{[1,i]}) + \frac{n^2}{2} - \frac{in}{2} + 2 - \frac{3}{2}(n - i) - \frac{1}{2^{n-i-1}} - t(\x_{[1,i]})\left(1 - \frac{1}{2^{n-i}}\right).
    \end{equation}
    Notice that $Z_n = f_n(\x)$ also satisfies Eq.~\eqref{eq:Zi}. Thus, for $ 1 < i \le n$, we have
    \begin{eqnarray*}
     Z_i - Z_{i-1} & = & f_n(\x_{[1,i]}) - f_n(\x_{[1,i-1]}) - \frac{n}{2} + \frac{3}{2} - \frac{1}{2^{n-i}} \\
        & &  \quad - t(\x_{[1,i]})\left(1-\frac{1}{2^{n-i}}\right) + t(\x_{[1,i-1]})\left(1 - \frac{1}{2^{n-i+1}}\right).
    \end{eqnarray*}
    The difference between $f_n(\x_{[1,i]}) $ and $f_n(\x_{[1,i-1]})$ is given by Corollary~\ref{corollary: f difference}, also note that $t(\x_{[1,i]}) = 1$ if $x_{i-1} = x_i$ and $t(\x_{[1,i]}) = t(\x_{[1,i-1]}) + 1$ if $x_{i-1} \ne x_i$. Therefore, we have
    \begin{equation*}\label{eq: Z_diff}
        Z_i - Z_{i-1} =
        \begin{cases}
	  \frac{n}{2} - t(\x_{[1, i-1]})\left( 1 - \frac{1}{2^{n-i+1}} \right), & \textrm{if } x_{i-1} \ne x_i, \\
	  - \frac{n}{2} + t(\x_{[1, i-1]})\left(1 - \frac{1}{2^{n-i+1}}\right), & \textrm{if } x_{i-1} = x_i. \\
        \end{cases}
    \end{equation*}
    It is straightforward to check that $|Z_i - Z_{i-1}| \le \frac{n}{2}$ for $1 < i \le n$ and $Z_1 - Z_0 = 0$.\footnote{It can be verified by taking $i=1$ in Eq.~\eqref{eq:Zi}. Alternatively, it also follows from the symmetry induced by the equivalence relation defined in the proof of Lemma~\ref{lemma: exp h and t}.}
    Then by Theorem~\ref{thm: azuma}, we have
    \begin{equation*}
        \Pr(Z_n - Z_0 \ge \lambda) \le \exp \left( \frac{-\lambda^2}{2(0^2 + (n-1) (\frac{n}{2})^2)}\right) = \exp \left( \frac{-2\lambda^2}{n^2(n-1)} \right).
    \end{equation*}
    Take $\lambda = cn\sqrt{n-1}$, where $c$ is a positive constant, then we have
    \begin{equation}\label{equ: z upper tail}
        \Pr(Z_n - Z_0 \ge cn \sqrt{n-1}) \le e^{-2c^2} ,
    \end{equation}
  and
    \begin{equation} \label{equ: z lower tail}
        \Pr(Z_n - Z_0 \le - cn \sqrt{n-1}) \le e^{-2c^2}.
    \end{equation}

    Note that $|L_1(\x)| = Z_n + \frac{n}{2} + 1$, and $\underset{\x \in \mathbb{Z}_2^n}{\mathbb{E}} \left[ |L_1(\x)| \right] = Z_0 + \frac{n}{2} + 1$.
    Then Eq.s~\eqref{equ: upper tail} and~\eqref{equ: lower tail} {follow} from Eq.s~\eqref{equ: z upper tail} and~\eqref{equ: z lower tail}, respectively. The proof is completed.
\end{proof}

\subsection{The $m$-ary case}

Let $n,n'$ be positive integers. Again, to analyze $|L_1(\x)|$, for each $\y \in \mathbb{Z}_m^{n'}$, we define
\begin{equation*}
    f_{m,n}(\y) = \rho(\y)(mn-n-1) - \frac{1}{2}\sum_{i=1}^{a(\y)}s_i^2 + \frac{3}{2}\sum_{i=1}^{a(\y)}s_i  - a(\y),
\end{equation*}
where $s_i$ is the length of the $i$-th alternating segment of $\y$, and then $|L_1(\x)| = f_{m,n}(\x) + 2$.
In parallel to Lemma~\ref{lemma: f_n partition}, we try to express $f_{m,n}(\y)$ by $f_{m,n}(\y_{[1,i]})$ and $f_{m,n}(\y_{[i+1,n']})$ as follows.

\begin{lemma}\label{lemma: f_mn partition}
    Let $n > 1$ and $m, n' > 2$. For each $i \in [n'-1]$ and $\y \in \mathbb{Z}_m^{n'}$, we have
    \begin{itemize}
        \item If $y_i = y_{i+1}$,
            \begin{equation}\label{equ: f_mn partition 1}
                f_{m,n}(\y) = f_{m,n}(\y_{[1,i]}) + f_{m,n}(\y_{[i+1,n']}) - mn + n + 1;
            \end{equation}
        \item If $y_i \ne y_{i+1}$,
            \begin{equation}\label{equ: f_mn partition bound}
                \begin{aligned}
                    f_{m,n}(\y) & \le  f_{m,n}(\y_{[1,i]}) + f_{m,n}(\y_{[i+1,n']}), \\
                    f_{m,n}(\y) & \ge  f_{m,n}(\y_{[1,i]}) + f_{m,n}(\y_{[i+1,n']}) + 1 - t(\y_{[1,i]})h(\y_{[i+1,n']}).
                \end{aligned}
            \end{equation}
    \end{itemize}
\end{lemma}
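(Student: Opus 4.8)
The plan is to follow the template of the proof of Lemma~\ref{lemma: f_n partition}, exploiting the fact that the run count $\rho$, the total segment length $\sum_i s_i$, and the quantity $\sum_i s_i^2$ are nearly additive under the concatenation $\y = \y_{[1,i]}\,\y_{[i+1,n']}$, and controlling the single place where additivity can fail, namely the alternating segments meeting the cut $i\mid i+1$. Writing $\y_L=\y_{[1,i]}$, $\y_R=\y_{[i+1,n']}$ and abbreviating $t_L=t(\y_L)$, $h_R=h(\y_R)$, I would first record the run count: directly from the definition of a run, $\rho(\y)=\rho(\y_L)+\rho(\y_R)-1$ when $y_i=y_{i+1}$ (the run through the cut is severed) and $\rho(\y)=\rho(\y_L)+\rho(\y_R)$ when $y_i\neq y_{i+1}$. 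For Case~1 ($y_i=y_{i+1}$) the cut falls at a genuine boundary of the alternating-segment decomposition, so no alternating segment of $\y$ crosses position $i\mid i+1$; hence $a(\y)=a(\y_L)+a(\y_R)$ and both $\sum_i s_i$ and $\sum_i s_i^2$ are additive, and the only correction comes from $\rho$. Substituting $\rho(\y)=\rho(\y_L)+\rho(\y_R)-1$ into the definition of $f_{m,n}$ produces exactly the term $-(mn-n-1)=-mn+n+1$, which is Eq.~\eqref{equ: f_mn partition 1}.

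The substance lies in Case~2 ($y_i\neq y_{i+1}$), where $\rho$ is additive so the run term cancels and $f_{m,n}(\y)-f_{m,n}(\y_L)-f_{m,n}(\y_R)$ reduces to $-\tfrac12$ times the change in $\sum_i s_i^2$, plus $\tfrac32$ times the change in $\sum_i s_i$, minus the change in $a$. The key structural observation is \emph{locality}: away from the cut the alternating segments of $\y$ agree with those of $\y_L$ and $\y_R$, so only the last segment $A$ of $\y_L$ (length $t_L$, ending at position $i$) and the first segment $B$ of $\y_R$ (length $h_R$, starting at position $i+1$) can change, possibly together with one new segment straddling the junction. I would establish this by noting that every alternating segment not touching position $i$ or $i+1$ lies entirely inside $\y_L$ or inside $\y_R$ and is unaffected by the concatenation.

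I would then organize the local analysis by the two symbols lying just outside the junction, that is, by whether $y_{i-1}=y_{i+1}$ and whether $y_{i+2}=y_i$, keeping in mind that for $m\geq 3$ consecutive alternating segments may overlap in one symbol, and treating the degenerate possibilities $t_L=1$ and $h_R=1$ separately. These comparisons determine whether $A$ and $B$ remain intact while a length-$2$ segment appears at the junction, whether exactly one of $A,B$ is lengthened by a single symbol, or whether $A$ and $B$ amalgamate into one segment of length $t_L+h_R$. Evaluating the displayed difference in each configuration is then routine: full amalgamation increases $\sum_i s_i^2$ by $(t_L+h_R)^2-t_L^2-h_R^2=2t_Lh_R$, leaves $\sum_i s_i$ unchanged, and decreases $a$ by one, so the difference equals $-\tfrac12(2t_Lh_R)+1=1-t_Lh_R$, giving the lower bound; the opposite extreme (a single length-$2$ segment created at the junction, $A$ and $B$ intact) changes $\sum_i s_i^2$ by $4$, $\sum_i s_i$ by $2$, and $a$ by $1$, giving $-\tfrac12(4)+\tfrac32(2)-1=0$, the upper bound. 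The intermediate configurations give $1-t_L$ or $1-h_R$, both of which lie in $[\,1-t_Lh_R,\,0\,]$ since $t_L,h_R\geq 1$; together these yield Eq.~\eqref{equ: f_mn partition bound}.

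The step I expect to be the main obstacle is pinning down this local re-decomposition precisely: because consecutive alternating segments can share a symbol when $m\geq 3$, one must track the overlaps carefully—each overlap inflates $\sum_i s_i$ and must be counted—and dispose of the degenerate cases $t_L=1$ and $h_R=1$ by hand. Once the re-decomposition at the junction is settled, the remaining computation is elementary, and, exactly as in the binary Lemma~\ref{lemma: f_n partition}, the identity $(t_L+h_R)^2-t_L^2-h_R^2=2t_Lh_R$ carries the essential content.
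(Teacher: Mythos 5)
Your proof is correct, and its skeleton---additivity of $f_{m,n}$ under concatenation, locality of the segment decomposition away from the cut, and the identity $(t_L+h_R)^2-t_L^2-h_R^2=2t_Lh_R$---is the same as the paper's; the difference lies in how Case 2 is finished, and the difference is worth recording. Writing $t_L=t(\y_{[1,i]})$ and $h_R=h(\y_{[i+1,n']})$ as you do, the paper takes the unique maximal alternating segment of $\y$ crossing the cut, splits it into halves of lengths $l,r\ge 1$, and asserts the single exact identity $f_{m,n}(\y)=f_{m,n}(\y_{[1,i]})+f_{m,n}(\y_{[i+1,n']})+1-lr$, from which both bounds follow because $1\le l\le t_L$ and $1\le r\le h_R$ (the paper only notes $l=t_L$, $r=h_R$ when $l,r>1$). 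You instead enumerate the junction configurations (whether $y_{i-1}=y_{i+1}$, whether $y_i=y_{i+2}$, plus the degenerate lengths) and evaluate the difference in each case as $1-t_Lh_R$, $1-t_L$, $1-h_R$, or $0$, all of which lie in $[\,1-t_Lh_R,\,0\,]$. Your route is longer but is actually more rigorous on exactly the point you flag as the main obstacle: for $m\ge 3$ the paper's intermediate bookkeeping---``exactly one segment splits into two,'' $a(\y)=a(\y_{[1,i]})+a(\y_{[i+1,n']})-1$, and the implicit assumption that the two halves are themselves maximal segments of the two parts---fails in the degenerate configurations. For instance, $\y=012$ with $i=1$ has $a(\y)=2$ while $a(0)+a(12)-1=1$, and the half $\y_{[2,2]}=1$ is not a maximal segment of $12$. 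The paper's final formula $1-lr$ happens to survive these cases (essentially because segments of length $1$ or $2$ contribute $-\tfrac12 s^2+\tfrac32 s-1=0$ to the non-$\rho$ part of $f_{m,n}$, so miscounting them is harmless), but the paper never verifies this; your configuration-by-configuration computation is precisely that missing verification. In short: same approach and same key identity, with your write-up supplying the degenerate-case analysis that the paper glosses over.
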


\begin{proof}
    Let $\{s_i\}_{i=1}^{a(\x)}$ be the lengths of maximal alternating segments in $\y$.
    To lighten the notations, denote by $\boldsymbol{u} := \y_{[1,i]}$ and $\boldsymbol{v} := \y_{[i+1, n']}$.
    If $y_i = y_{i+1}$, then $y_i$ and $y_{i+1}$ belong to different alternating segments, and
    \begin{align*}
        \rho(\y) &= \rho(\boldsymbol{u}) + \rho(\boldsymbol{v}) - 1, \\
        a(\y) &= a(\boldsymbol{u}) + a(\boldsymbol{v}).
    \end{align*}
    Therefore, Eq.~\eqref{equ: f_mn partition 1} follows from the additive behavior of $f_{m,n}(\y)$.

    It remains to discuss the case when $y_i \ne y_{i+1}$.
    Observe that in this case, we split exactly one alternating segment $\y_{[i-l+1,i+r]}$ into two segments: $\y_{[i-l+1, i]}$ and $\y_{[i+1, i+r]}$, where $l,r \ge 1$.
    In addition, $a(\y) = a(\boldsymbol{u}) + a(\boldsymbol{v}) - 1$.
    The difference between $f_{m,n}(\y)$ and $f_{m,n}(\boldsymbol{u}) + f_{m,n}(\boldsymbol{v})$ is given by
    \begin{equation*}
        \begin{split}
            - \frac{1}{2} (l+r)^2 + \frac{1}{2}(l^2 + r^2) + 1 = 1 - lr,
        \end{split}
    \end{equation*}
    and
    \begin{equation*}
        f_{m,n}(\y) = f_{m,n}(\y_{[1,i]}) + f_{m,n}(\y_{[i+1,n']}) + 1 - lr.
    \end{equation*}
    Note that if $l,r >1$, then $l = t(\boldsymbol{u})$, $r = h(\boldsymbol{v})$. Thus, Eq.~\eqref{equ: f_mn partition bound} holds.
\end{proof}

\begin{theorem}\label{thm: m-ary distribution}
    Let $m>2, n>3$ be integers, and $x_1, \dots, x_n$ be independent random variables such that $\Pr(x_i = j) = \frac{1}{m}$ for $i \in [n]$, $j \in \mathbb{Z}_m$. Then for the word $\x = x_1,\dots,x_n$, we have
    \begin{equation}\label{equ: m-ary lower tail}
        \Pr \left( {|L_1(\x)| - \underset{\x \in \mathbb{Z}_m^n}{\mathbb{E}} \left[ |L_1(\x)| \right] \ge  c (m + \frac{1}{m}) n \sqrt{n-1}}  \right) \le e^{-c^2/2},
    \end{equation}
    and
    \begin{equation}\label{equ: m-ary upper tail}
        \Pr \left( {|L_1(\x)| - \underset{\x \in \mathbb{Z}_m^n}{\mathbb{E}} \left[ |L_1(\x)| \right] \le - c (m + \frac{1}{m}) n \sqrt{n-1}}  \right) \le e^{-c^2/2},
    \end{equation}
    where $\underset{\x \in \mathbb{Z}_m^n}{\mathbb{E}} \left[ |L_1(\x)| \right]$ is given in Eq.~\eqref{equ: expectation}, and {$c$} is a positive constant.
\end{theorem}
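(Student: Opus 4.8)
The plan is to mirror the binary argument of Theorem~\ref{thm: binary distribution}, replacing $f_n$ by $f_{m,n}$ and using the identity $|L_1(\x)| = f_{m,n}(\x) + 2$. I would define the Doob martingale $Z_0 = \E{f_{m,n}(\x)}$ and $Z_i = \E{f_{m,n}(\x) \mid \x_{[1,i]}}$ by exposing one symbol at a time, so that $Z_n = f_{m,n}(\x) = |L_1(\x)| - 2$ and, by Theorem~\ref{thm: barlev22}, $Z_0 = \E{|L_1(\x)|} - 2$. The two tail bounds then follow from Azuma's inequality (Theorem~\ref{thm: azuma}) applied to $Z_n - Z_0$, once a bound $|Z_i - Z_{i-1}| \le c_i$ is established, together with the translation $|L_1(\x)| = Z_n + 2$.

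To compute $Z_i$, I would condition on whether $x_i = x_{i+1}$ (probability $1/m$) or $x_i \ne x_{i+1}$ (probability $(m-1)/m$) and invoke Lemma~\ref{lemma: f_mn partition}. Since the symbols past position $i$ are independent of $\x_{[1,i]}$, the term $\E{f_{m,n}(\x_{[i+1,n]})}$ is computable from Lemma~\ref{lemma: partial expectation}, and by the equivalence-relation symmetry of Lemma~\ref{lemma: exp h and t} this expectation is unaffected by the conditioning; likewise $\E{h(\x_{[i+1,n]}) \mid x_i \ne x_{i+1}} = 2 - 1/m^{n-i-1}$. The essential new feature is that for $y_i \ne y_{i+1}$ the lemma supplies only the two-sided bound~\eqref{equ: f_mn partition bound} rather than an exact identity. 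Hence, instead of a single formula for $Z_i$, I would derive matching upper and lower estimates that sandwich $Z_i$ between $f_{m,n}(\x_{[1,i]}) + \E{f_{m,n}(\x_{[i+1,n]})} - n + \frac{n+1}{m}$ and that same quantity minus $\frac{m-1}{m}\bigl(t(\x_{[1,i]})(2 - 1/m^{n-i-1}) - 1\bigr)$.

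Next I would bound the martingale difference. Writing $Z_{i-1}$ as the average of $Z_i^{(a)} := \E{f_{m,n}(\x)\mid \x_{[1,i-1]}, x_i = a}$ over $a \in \mathbb{Z}_m$, it suffices to control the spread $\max_a Z_i^{(a)} - \min_a Z_i^{(a)}$, since $|Z_i - Z_{i-1}|$ is at most this range. As the last exposed symbol $a$ varies, two competing effects appear: the change in $f_{m,n}(\x_{[1,i-1]}a)$ (governed by whether $a$ extends the current run, extends the alternation, or opens a fresh length-two segment) and the change in the correction term involving $t(\x_{[1,i]})$, whose value ranges from $1$ (when $a = x_{i-1}$) up to $t(\x_{[1,i-1]}) + 1$ (when $a$ continues the alternation). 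The aim is to show these combine to give $|Z_i - Z_{i-1}| \le (m + \tfrac1m)n$ for $1 < i \le n$, with $Z_1 - Z_0 = 0$ by the same symmetry used in Lemma~\ref{lemma: exp h and t}. Feeding $c_i = (m+\tfrac1m)n$ and $\lambda = c(m+\tfrac1m)n\sqrt{n-1}$ into Azuma yields exactly the exponent $-c^2/2$, since $\sum_{i} c_i^2 = (n-1)(m+\tfrac1m)^2 n^2$.

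The hard part will be the third step: establishing the clean constant $(m+\tfrac1m)n$ rather than a looser $O(mn)$ bound, and this difficulty is genuinely new for $m \ge 3$. In the binary case the left split length $l$ always equals $t(\x_{[1,i]})$, which is what makes Corollary~\ref{corollary: f difference} exact; but when three distinct symbols meet at the cut (possible only for $m \ge 3$) one may have $l = 1$ while $t(\x_{[1,i]}) \ge 2$, so the decomposition degrades to an inequality. I therefore expect the crux to be a careful case analysis showing that the $O(t)$ growth coming from the $-\frac12\sum s_i^2$ part of $f_{m,n}(\x_{[1,i]})$ is cancelled, up to the stated constant, by the $O(t)$ term $\frac{m-1}{m}t(\x_{[1,i]})(2 - 1/m^{n-i-1})$ in the correction. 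Verifying that this cancellation survives uniformly over all configurations of $a$, and over the two-sided gap in Lemma~\ref{lemma: f_mn partition}, is where the bulk of the work lies.
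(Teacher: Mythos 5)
Your scaffolding coincides exactly with the paper's proof: the same Doob martingale $Z_i = \E{f_{m,n}(\x) \mid \x_{[1,i]}}$, the same conditioning on $x_i = x_{i+1}$ versus $x_i \ne x_{i+1}$, and your two-sided sandwich for $Z_i$ simplifies to precisely the paper's Eq.~\eqref{equ: Z_i bound}; the identity $Z_1 = Z_0$ and the final Azuma arithmetic (with $c_i = n(m+\frac{1}{m})$ and $\lambda = cn(m+\frac{1}{m})\sqrt{n-1}$ giving exponent $-c^2/2$) are also correct. However, there is a genuine gap: the bounded-difference estimate $|Z_i - Z_{i-1}| \le n(m+\frac{1}{m})$ for $2 \le i \le n$ is never proved --- you state it as ``the aim'' and explicitly defer it as ``where the bulk of the work lies.'' That estimate \emph{is} the proof of this theorem (the paper devotes all of Appendix~\ref{sec-appendixa} to it); without it, Azuma's inequality cannot be invoked with the stated constants, and what you have is a plan rather than a proof. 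A secondary gap: your expression for $Z_i$ conditions on $x_{i+1}$, which is meaningless at $i = n$; the case $i = n$ must be handled separately using $Z_n = f_{m,n}(\x)$ exactly, as the paper does.

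It is worth noting that the missing step does not require the delicate cancellation you anticipate, and your proposed detour through the range $\max_a Z_i^{(a)} - \min_a Z_i^{(a)}$ (valid in principle, since $Z_{i-1}$ is the average of the $Z_i^{(a)}$) is unnecessary --- it would in any case need the same estimates, because you have no exact formula for $Z_i^{(a)}$, only the sandwich. The paper's route is cruder and direct: (i) applying Lemma~\ref{lemma: f_mn partition} to the split at position $i-1$, together with $f_{m,n}(x_i) = mn-n-1$, gives $0 \le f_{m,n}(\x_{[1,i]}) - f_{m,n}(\x_{[1,i-1]}) \le mn-n-1$; (ii) the quantity $g_{m,n}(i) - g_{m,n}(i-1) = 1 - n(m+\frac{1}{m}-2) - \frac{1}{m^{n-i}}$ is explicit from Eq.~\eqref{equ: g_i}; (iii) the correction terms are bounded brutally via $t(\x_{[1,i]}) \le n-1$, so $\frac{m-1}{m}\,t\,(2 - \frac{1}{m^{n-i-1}}) < 2n$. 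Subtracting the lower sandwich bound at $i-1$ from the upper one at $i$ (and vice versa) then yields $-n(m+\frac{1}{m})+1 < Z_i - Z_{i-1} < n(3-\frac{1}{m}) \le n(m+\frac{1}{m})$ for $1<i<n$, with a short separate computation at $i=n$. So the $O(t)$ growth from $-\frac{1}{2}\sum s_i^2$ never needs to be cancelled against the $t$-correction; both are simply absorbed, and the ``two-sided gap'' of Lemma~\ref{lemma: f_mn partition} costs nothing beyond the constants above. To complete your proof, you should carry out exactly this case analysis (or your range-over-$a$ variant of it) rather than leave it as an announced goal.
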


\begin{proof}
    As in Section~\ref{sec: binary case}, define the Doob martingale $Z_0 = \E{f_{m,n}(\x)}$, and $Z_i = \E{f_{m,n}(\x) \middle \vert \x_{[1,i]}}$ for $1 \le i \le n$.
    Note that by symmetry defined in the proof of Lemma~\ref{lemma: exp h and t},
    \[
        Z_1 = Z_0 = n^2 \left(m + \frac{1}{m} -2\right)  - \frac{n}{m} - \frac{1}{m^n(m-1)} + \frac{1}{m-1} - \frac{1}{m^n}.
    \]

For $1 < i < n$,
\begin{eqnarray*}
Z_i & = & \E{f_{m,n}(\x) \middle\vert \x_{[1,i]}} \\
    & = & \E{f_n(\x) \middle\vert \x_{[1,i]}, x_{i} = x_{i+1}} \Pr(x_i = x_{i+1}) \\
    &  & \quad + \E{f_n(\x) \middle\vert \x_{[1,i]}, x_{i} \ne x_{i+1}} \Pr(x_i \ne x_{i+1}) .
\end{eqnarray*}

By Lemma~\ref{lemma: f_mn partition}, for $1 < i < n$,
\begin{equation}\label{equ: Z_i bound}
    \begin{split}
        Z_i & \le \frac{1}{m} \cdot \mathbb{E} \left[f_{m,n}(\x_{[1,i]}) + f_{m,n}(\x_{[i+1,n]}) - mn+n+1 \vert x_i = x_{i+1}, \x_{[1,i]} \right] \\
            & \quad + \frac{m-1}{m} \cdot \mathbb{E} \left[ f_{m,n}(\x_{[1,i]}) + f_{m,n}(\x_{[i+1,n]})  \vert x_i \ne x_{i+1}, \x_{[1,i]}\right] \\
            & = f_{m,n}(\x_{[1,i]}) + g_{m,n}(i) + \frac{1}{m}(-mn + n + 1) \\
            & = f_{m,n}(\x_{[1,i]}) + g_{m,n}(i) - n + \frac{n}{m} + \frac{1}{m}, \\
        Z_i & \ge  \frac{1}{m} \cdot \mathbb{E} \left[f_{m,n}(\x_{[1,i]}) + f_{m,n}(\x_{[i+1,n]}) - mn+n+1 \vert x_i = x_{i+1}, \x_{[1,i]} \right] \\
            & \quad + \frac{m-1}{m} \cdot \mathbb{E} \left[ f_{m,n}(\x_{[1,i]}) + f_{m,n}(\x_{[i+1,n]} + 1 \right.\\
	    & \quad \left. - t(\x_{[1,i]})h(\x_{[i+1,n]}))  \vert x_i \ne x_{i+1}, \x_{[1,i]}\right] \\
            & = f_{m,n}(\x_{[1,i]}) + g_{m,n}(i) + \frac{1}{m}(- mn + n + 1) \\
	    & \quad + \frac{m-1}{m}\left(1 - t(\x_{[1,i]})\E{h(\x_{[i+1,n]}) \middle \vert x_{i} \ne x_{i+1}}\right) \\
             & = f_{m,n}(\x_{[1,i]}) + g_{m,n}(i) + 1 -n + \frac{n}{m} - \frac{m-1}{m}t(\x_{[1,i]})\left(2 - \frac{1}{m^{n-i-1}}\right),
    \end{split}
\end{equation}
where $g_{m,n}(i) := \E{f_{m,n}(\x_{[i+1,n]}) \middle \vert \x_{[1,i]}}$, and by Lemma~\ref{lemma: partial expectation}, we have
\begin{eqnarray}\label{equ: g_i}
  \lefteqn{g_{m,n}(i) = \underset{\x \in \mathbb{Z}_{m}^{n}}{\mathbb{E}}[f_{m,n}(\x_{[i+1,n]}) \vert \x_{[1,i]}]}\nonumber \\
  & = & \underset{\x \in \mathbb{Z}_{m}^{n-i}}{\mathbb{E}} [f_{m,n}(\x)] \nonumber \\
  & = & (mn - n - 1) \E{\rho(\x)} -\frac{1}{2} \E{\sum_{i=1}^{a(\x)} s_i^2} + \frac{3}{2} \E{\sum_{i=1}^{a(\x)} s_i} - \E{a(\x)} \nonumber \\
  & = & n(n-i)(m + \frac{1}{m} -2) - \frac{n}{m} + \frac{1}{m-1} - \frac{1}{(m-1)m^{n-i}} + i - \frac{1}{m^{n-i}}.
\end{eqnarray}

Now we claim that $|Z_i - Z_{i-1}|$ can be bounded as follows.
The proof details are left in Appendix~\ref{sec-appendixa}.

\begin{equation} \label{equ: bounded difference}
    |Z_i - Z_{i-1}| \le
    \begin{cases}
        0, &  i = 1, \\
        n(m + \frac{1}{m}),  & 2 \le i \le n. \\
    \end{cases}
\end{equation}

The result then follows by Theorem~\ref{thm: azuma}.
\end{proof}

\begin{remark}
    The key step of proving Theorem~\ref{thm: binary distribution} and Theorem~\ref{thm: m-ary distribution} is to calculate $Z_i$ to bound $|Z_i - Z_{i-1}|$.
    One may wonder why $m=2$ is not a special case of the general $m$.
    This is because we do not have explicit expressions of $Z_i$ for $m > 2$, but it is possible to derive it.
    Indeed, the proof of Lemma~\ref{lemma: f_mn partition} in fact expresses $f_{m,n}(\x)$ by $f_{m,n}(\x_{[1,i]})$, $f_{m,n}(\x_{[i+1,n]})$ and $l_i,r_i$, where $\x_{[i - l_i + 1, i + r_i]}$ is the maximal alternating segment that lies on both $\x_{[1,i]}$ and $\x_{[i+1,n]}$.
    Furthermore, one can calculate the conditional expectation of $l_i$ and $r_i$ given $\x_{[1,i]}, x_{i+1}, x_{i+2}$, which yields an explicit expression on $Z_i$ and $Z_i - Z_{i-1}$.
    However, this calculation is much more complex and neither significantly improves Theorem~\ref{thm: m-ary distribution}  (still, $|Z_i - Z_{i-1}| = O(n)$) nor gives more insights on this problem.
\end{remark}

\section{Conclusion}\label{sec: conclusion}
In this paper, we analyze the distribution of $|L_1(\x)|$ for $\x \in \mathbb{Z}_m^n$ by Azuma's inequality. The numerical result suggests that $|L_1(\x)|$ are more concentrated than we expected. Specifically, the gap between the simulation results and the bounds in Theorem~\ref{thm: binary distribution} and Theorem~\ref{thm: m-ary distribution} is still large (see Appendix~\ref{app:simulation}), leaving the derivation of better bounds as an open problem. Intuitively, the distribution of $|L_t(\x)|$ should be more and more concentrated as $t$ grows. For example, $|L_n(\x)| = m^n$ for all $\x \in \mathbb{Z}_m^n$. However, finding the distribution of $|L_t(\x)|$ is in general difficult and left open. Very recently, He and Ye~\cite{HY23} considered the case of radius two, and further gave the concentration bound for $|L_2(\x)|$.







\begin{thebibliography}{1}

\bibitem{AS16}
Noga Alon and Joel~H. Spencer, \emph{The probabilistic method}, fourth ed.,
  Wiley Series in Discrete Mathematics and Optimization, John Wiley \& Sons,
  Inc., Hoboken, NJ, 2016. \MR{3524748}

\bibitem{Barlev22}
Daniella Bar-Lev, Tuvi Etzion, and Eitan Yaakobi, \emph{On the size of balls
  and anticodes of small diameter under the fixed-length {L}evenshtein metric},
  IEEE Trans. Inform. Theory \textbf{69} (2023), no.~4, 2324--2340.

\bibitem{Calabi69}
L.~Calabi and W.~E. Hartnett, \emph{Some general results of coding theory with
  applications to the study of codes for the correction of synchronization
  errors}, Information and Control \textbf{15} (1969), 235--249. \MR{261997}

\bibitem{HY23}
Lang He and Min Ye, \emph{The size of levenshtein ball with radius 2:
  Expectation and concentration bound}, 2023 IEEE International Symposium on
  Information Theory (ISIT), IEEE, 2023, pp.~850--855.

\bibitem{Hirs00}
Daniel~S. Hirschberg and Mireille Regnier, \emph{Tight bounds on the number of
  string subsequences}, J. Discrete Algorithms (Oxf.) \textbf{1} (2000), no.~1,
  123--132. \MR{1856838}

\bibitem{Leven66}
V.~I. Levenshtein, \emph{Binary codes capable of correcting deletions,
  insertions, and reversals}, Soviet Physics Dokl. \textbf{10} (1965),
  707--710. \MR{0189928}

\bibitem{Mitzen09}
Michael Mitzenmacher, \emph{A survey of results for deletion channels and
  related synchronization channels}, Probab. Surv. \textbf{6} (2009), 1--33.
  \MR{2525669}

\bibitem{Sala13}
Frederic Sala and Lara Dolecek, \emph{Counting sequences obtained from the
  synchronization channel}, 2013 IEEE International Symposium on Information
  Theory (ISIT 2013), IEEE, 2013, pp.~2925--2929.

\end{thebibliography}

\providecommand{\bysame}{\leavevmode\hbox to3em{\hrulefill}\thinspace}
\providecommand{\MR}{\relax\ifhmode\unskip\space\fi MR }
\providecommand{\MRhref}[2]{%
  \href{http://www.ams.org/mathscinet-getitem?mr=#1}{#2}
}
\providecommand{\href}[2]{#2}

\appendix

\section{Proof of Eq.~\eqref{equ: bounded difference}}\label{sec-appendixa}
We follow the notations in the proof of Theorem~\ref{thm: m-ary distribution}.

\begin{itemize}
    \item Case $i = 1$:
          By symmetry, we have $|Z_1 - Z_0| = 0$.

    \item Case $1< i <n$:

          By Eq.~\eqref{equ: Z_i bound}, we have
    \begin{eqnarray*}
      \lefteqn{Z_i - Z_{i-1}} \\
      & \ge & f_{m,n}(\x_{[1,i]}) + g_{m,n}(i) + 1 -n + \frac{n}{m} - \frac{m-1}{m}t(\x_{[1,i]})\left(2 - \frac{1}{m^{n-i-1}}\right) \\
      &  &  \quad -  \left[f_{m,n}(\x_{[1,i-1]}) + g_{m,n}(i-1) -n + \frac{n}{m} + \frac{1}{m}\right] \\
      & = &  f_{m,n}(\x_{[1,i]}) - f_{m,n}(\x_{[1,i-1]}) + g_{m,n}(i) - g_{m,n}(i-1) \\
      &  & \quad  + \frac{m-1}{m} - \frac{m-1}{m}t(\x_{[1,i]})\left(2 - \frac{1}{m^{n-i-1}}\right) ,
    \end{eqnarray*}

    \begin{eqnarray*}
      \lefteqn{Z_i - Z_{i-1}} \\
      & \le & f_{m,n}(\x_{[1,i]}) + g_{m,n}(i) - n + \frac{n}{m} + \frac{1}{m} \\
      &   & \quad - \left[ f_{m,n}(\x_{[1,i-1]}) + g_{m,n}(i-1) + 1 -n + \frac{n}{m} - \right. \\
      &   &  \quad \left. \frac{m-1}{m}t(\x_{[1,i-1]})\left(2 - \frac{1}{m^{n-i}}\right)\right] \\
      & = & f_{m,n}(\x_{[1,i]}) - f_{m,n}(\x_{[1,i-1]}) + g_{m,n}(i) - g_{m,n}(i-1)  \\
      &   &  \quad - \frac{m-1}{m} + \frac{m-1}{m}t(\x_{[1,i-1]})\left(2 - \frac{1}{m^{n-i}}\right).
    \end{eqnarray*}
          By Lemma~\ref{lemma: f_mn partition}, we have\footnote{Note that $f_{m,n}(x_i) = mn - n - 1$, and we then express $f_{m,n}(\x_{[1,i]})$ by $f_{m,n}(\x_{[1,i-1]})$ and $f_{m,n}(x_{i})$.}
        \[
            0 \le f_{m,n}(\x_{[1,i]}) - f_{m,n}(\x_{[1,i-1]}) \le mn - n - 1,
        \]
        and
        \[
          g_{m,n}(i) - g_{m,n}(i-1) = 1 - n(m + \frac{1}{m} - 2) - \frac{1}{m^{n-i}}.
        \]

          Therefore, we have
        \begin{equation*}
            \begin{split}
                Z_i - Z_{i-1} & \ge  1 - n \left(m + \frac{1}{m} - 2 \right) - \frac{1}{m^{n-i}} + \frac{m-1}{m} - \frac{m-1}{m}t(\x_{[1,i]})\left(2 - \frac{1}{m^{n-i-1}}\right) \\
              & >  - n\left(m + \frac{1}{m} - 2\right) - \frac{1}{m^{n-i}} - (n-1) \cdot 2 \\
              & >  - n\left(m + \frac{1}{m}\right) + 1,
            \end{split}
        \end{equation*}
        and also
        \begin{equation*}
            \begin{split}
                Z_i - Z_{i-1} & \le mn - n - 1 + 1 - n\left(m + \frac{1}{m} - 2\right) - \frac{1}{m^{n-i}}  - \frac{m-1}{m}  \\ 
		& \quad + \frac{m-1}{m}t(\x_{[1,i-1]})\left(2 - \frac{1}{m^{n-i}}\right) \\
                & = n\left(1 - \frac{1}{m}\right) - \frac{1}{m^{n-i}} - \frac{m-1}{m} + \frac{m-1}{m}t(\x_{[1,i-1]})\left(2 - \frac{1}{m^{n-i}}\right) \\
                & <  n\left(1 - \frac{1}{m}\right) + 2n = n\left(3 - \frac{1}{m}\right).
            \end{split}
        \end{equation*}
          Note that $|-n(m + \frac{1}{m}) + 1| \le n(m + \frac{1}{m})$ and $|n(3 - \frac{1}{m})| \le n(m + \frac{1}{m})$. Hence, we have $|Z_i - Z_{i-1}| \le n(m + \frac{1}{m})$ for $1 < i < n$.

    \item Case $i = n$:

          By Eq.~\eqref{equ: Z_i bound} and Lemma~\ref{lemma: f_mn partition}, we have
          \begin{equation*}
            0 \le f_{m,n}(\x) - f_{m,n}(\x_{[1,n-1]}) \le mn -n - 1,
          \end{equation*}
          and
          \begin{equation*}
            \begin{split}
                Z_n - Z_{n-1} & \le  f_{m,n}(\x) - \left[ f_{m,n}(\x_{[1,n-1]}) + \left(1 - \frac{1}{m} \right) \left(mn -n -t(\x_{[1,n-1]})\right)  \right]  \\
                & \le  mn - n - 1 - \left(1 - \frac{1}{m} \right) (mn - n -t(\x_{[1,n-1]})) \\
                & = t(\x_{[1,n-1]}) \left(1 - \frac{1}{m} \right) - \frac{n}{m} \\
                & < n \left( 1 - \frac{1}{m} \right),
            \end{split}
          \end{equation*}
          and also,
          \begin{equation*}
            \begin{split}
                Z_n - Z_{n-1} &  \ge f_{m,n}(\x) - \left[  f_{m,n}(\x_{[1,n-1]}) + \left(1 - \frac{1}{m}\right)(mn - n -1) \right] \\
                & \ge -\left(1 - \frac{1}{m}\right)(mn - n - 1)\\
                & = -n\left(m + \frac{1}{m} \right) - \frac{1}{m} + 1.
            \end{split}
          \end{equation*}
          Thus,  we have $|Z_n - Z_{n-1}| \le n(m + \frac{1}{m})$.
\end{itemize}

\section{Simulation results}~\label{app:simulation}

We independently pick $x \in \mathbb{Z}_m^n$ uniformly at random and record the value $|L_1(\x)|$.
The distribution of $|L_1(\x)|$ is then reflected by the frequency that $|L_1(\x)|$ lies in different intervals. We also compare it with the expected frequency given by the bounds in Theorems~\ref{thm: binary distribution} and~\ref{thm: m-ary distribution}.
For instance, the expected frequency of the event $|L_1(\x)| > \tau$ by Eq.~\eqref{equ: upper tail} is
$$N e^{-2 \left(\frac{\tau - \E{|L_1(\x)|}}{n \sqrt{n-1}}\right)^2},$$
where $N$ is the sample size.
The simulation results for $n = 100$, $m = 2,3,4,5$ are depicted in Fig.~\ref{fig:1}.

\begin{figure}[h]
    \includegraphics[width=\textwidth]{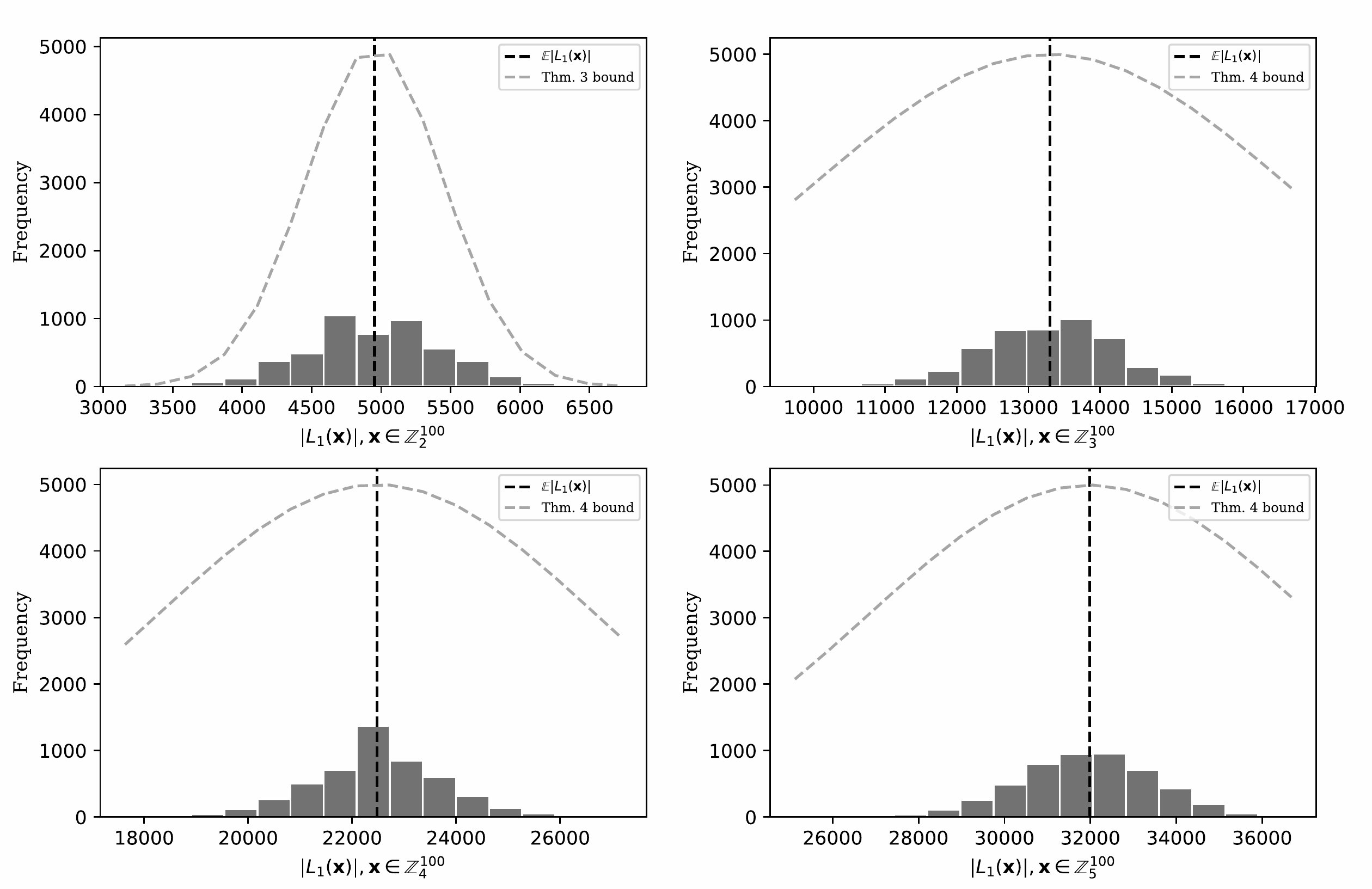}
    \caption{$N = 5000$ for each experiment.}
    \label{fig:1}
  \end{figure}

\end{document}